\def\BibTeX{{\rm B\kern-.05em{\sc i\kern-.025em b}\kern-.08em
    T\kern-.1667em\lower.7ex\hbox{E}\kern-.125emX}}
\newtheorem{remark}{Remark}
\newtheorem{theorem}{Theorem}
\newtheorem{lemma}{Lemma}
\newtheorem{corollary}{Corollary}
\newtheorem{definition}{Definition}
\begin{document}

\title{Inverse Inference on Cooperative Control of Networked Dynamical Systems}
\author{Yushan Li$^{\dag}$, Jianping He$^{\ddag}$,  and Dimos V. Dimarogonas$^{\dag}$
\thanks{
    ${\dag}$: Yushan Li and Dimos V. Dimarogonas are with the Division of Decision and Control Systems, KTH Royal Institute of Technology, Stockholm, Sweden. E-mail: \{yushanl,dimos\}@kth.se.}
\thanks{ 
    ${\ddag}$: Jianping He is with the Department of Automation, Shanghai Jiao Tong University, Shanghai, China. E-mail: jphe@sjtu.edu.cn. }%
}

\maketitle

\begin{abstract}
Recent years have witnessed the rapid advancement of understanding the control mechanism of networked dynamical systems (NDSs), which are governed by components such as nodal dynamics and topology. 
This paper reveals that the critical components in continuous-time state feedback cooperative control of NDSs can be inferred merely from discrete observations. 
In particular, we advocate a bi-level inference framework to estimate the global closed-loop system and extract the components, respectively. 
The novelty lies in bridging the gap from discrete observations to the continuous-time model and effectively decoupling the concerned components. 
Specifically, in the first level, we design a causality-based estimator for the discrete-time closed-loop system matrix, which can achieve asymptotically unbiased performance when the NDS is stable. 
In the second level, we introduce a matrix logarithm based method to recover the continuous-time counterpart matrix, providing new sampling period guarantees and establishing the recovery error bound. 
By utilizing graph properties of the NDS, we develop least square based procedures to decouple the concerned components with up to a scalar ambiguity. 
Furthermore, we employ inverse optimal control techniques to reconstruct the objective function driving the control process, deriving necessary conditions for the solutions. 
Numerical simulations demonstrate the effectiveness of the proposed method.   
\end{abstract}

\begin{IEEEkeywords}
Networked dynamical systems, network inference, cooperative control, topology identification.
\end{IEEEkeywords}

\section{Introduction}\label{sec:introduction}
In the last decades, networked dynamical systems (NDSs) have played a crucial role in many engineering and biological fields, e.g., multi-robot formation \cite{oh2015survey}, power grids \cite{10113752}, human brain \cite{he2022uncovering}, and immune cell network \cite{tajvar2023modelling}. 
An NDS, comprising multiple interconnected nodes, is characterized by not only the self-dynamics of a single node (nodal dynamics) but also the interaction structure (topology) between nodes, 
and can achieve various cooperative behaviors such as synchronization. 
However, the prior information about the nodal dynamics and topology is not always accessible in practice, and needs to be inferred from observations. 
This inference enhances our ability to understand, predict, and intervene with the NDS \cite{gao2023data}.

\subsection{Motivations}

This paper focuses on the continuous-time linear state-feedback cooperative control of NDSs, where only discrete and noisy observations on a single round of the system's trajectory are available. 
In particular, we aim to provide a systematic approach for inferring both the nodal dynamics and the topology of the NDS, and subsequently reconstructing the control objective function that governs the process. 

The motivation for addressing this problem stems from two main aspects. 
First, most existing studies that concentrate on network inference or identification primarily deal with the topology structure, while assuming that the nodal dynamics are known a priori and that the nodal state is scalar \cite{vanwaarde2021topology,LI20238381}. 
However, finding solutions for cases involving high-dimensional and unknown nodal dynamics remains an open issue. 
Second, in contrast to traditional system identification methods, which typically operate under open-loop testing on a single system, the state feedback control of NDSs is an interconnected and closed-loop process without any external excitation. 
This characteristic implies that the nodal dynamics and topology are inherently intertwined with the underlying feedback gain, creating an additional challenge in decoupling these unknown factors. 
Therefore, it is essential to develop new methodologies to simultaneously tackle such issues and provide theoretical guarantees.


\subsection{Related Works}

The study of inferring the cooperative control of NDSs has a close relationship with multiple fields, including system identification, inverse optimal control, and topology inference. 
We briefly review the literature most related to this work as follows.

Inferring or identifying a system model from input/output data has a long history, especially for linear time-invariant systems. 
The basic idea is to let the systems run open-loop or excite them with inputs and use the collected to construct estimators. 
Numerous classic works have been devoted to the asymptotic consistency of the proposed estimators \cite{ljung1998system}. 
Recently, the finite time identification performance on the system model such as the Markov matrix has received considerable attention \cite{simchowitz2018learning,sarkar2019near,zheng2021nonasymptotic,9946382}, which mainly depends on the tools of stochastic optimization and concentration measure (we refer to \cite{10383489} for a detailed review). 
Note that in these works, the system implementation for identification is determined by the user to collect sufficient informative data (e.g., injecting random inputs into the system), and does not involve closed-loop control with specific objectives. 
In addition, there is a growing number of studies focusing on ``learning'' a controller from the collected data without the need for system identification \cite{fattahi2020efficient,depersis2021lowcomplexity,hu2023theoreticala}. 
For example, the work \cite{9903320} develops an effective data-driven resilient control scheme with the system stability guaranteed. 
The authors in \cite{yu2023onlineb} propose an online algorithm to stabilize an unknown networked linear system under communication constraints and disturbance. 
Concerning this direction, we observe that the focus of the aforementioned works is to construct appropriate stabilizing controllers for the unknown system, not to infer the underlying mechanism of well-designed controllers.

Given the observations of the system trajectory, the inverse optimal control provides an effective paradigm to infer the objective functions that determine the control task \cite{abazar2020inverse}. 
In this direction, the seminal work \cite{Kalman1964optimal} proposed the inverse optimality problem for linear systems. 
Then, numerous works have been developed based on different system models and prior assumptions, e.g., linear \cite{priess2015solutions} or nonlinear systems \cite{menner2020maximum,jin2021inverse}. 
Specifically, the works \cite{zhang2019inverseb,li2020continuoustime,yu2021system} investigate the popular linear quadratic (LQ) regulator cases, where the feedback gain matrix or the optimal inputs are required to be given.  
The recent works \cite{zhang2024statistically,10697272} have further shown that the objective function can be recovered by merely using multiple observation trajectories. 
Despite the fruitful results, we observe that few works have considered the inverse optimal control of networked dynamical systems, where the topology structure will pose new challenges to the algorithm design.


Note that the network topology and node dynamics intertwine in an NDS. 
Towards the former aspect, numerous methods have been developed on topology identification in the literature, e.g., by graph signal processing \cite{mateos2019connecting}, vector autoregression \cite{zaman2021online}, compressed sensing \cite{hayden2016sparse}, excitation design \cite{bombois2023informativity}, etc. 
Note that in most topology inference literature, the nodal dynamics are assumed as prior knowledge, e.g., see \cite{vanwaarde2021topology,prasse2022predicting,10337619}. 
A few works have also investigated the algorithm design for joint inference situations. 
For example, 
\cite{wai2019joint} investigates how to jointly estimate the network topology and the parameters of nodal dynamics from perturbed stationary points, 
and \cite{10210488} utilize the backpropagation through time algorithm to construct a neural network, which is further used to infer the topology structure and nodal dynamics. 
The nodal dynamics functions in these works can be nonlinear, but the nodal states are generally one-dimensional. 
In contrast, this work focuses on the linear high-dimensional cases for nodal dynamics, 
which will introduce another unknown feedback gain matrix in the inference process. 
How to overcome the influence of this component needs to be further addressed.

\subsection{Contributions}

To address the aforementioned issues, we propose a bi-level inference method to reveal the components involved in state feedback cooperative control, including the nodal dynamics, the topology, and the feedback control gain. 
The primary challenges lie in i) establishing the explicit relationship between discrete observations and continuous-time model parameters, 
and ii) overcoming the inherent couplings concerned components. 
Our key idea is to first construct the closed-loop system matrix of the NDS from noisy observations, and then extract the relevant components by exploiting the cooperative nature. 
The contributions are summarized as follows.

\begin{itemize}
\item We derive conditions of inferring the control mechanism of NDSs through observation driven methods. 
Precisely, we propose a bi-level inference framework to infer the nodal dynamics and the topology in continuous-time state-feedback cooperative control, along with theoretical guarantees. 

\item  At the first level, we design a causality based estimator for the discretized closed-loop system matrix. 
This estimator effectively bridges the gap between discrete observations and the continuous-time model, achieving unbiased accuracy in the asymptotic sense when the NDS is stable. 

\item We develop matrix logarithm and least squares based procedures to decouple the concerned components in the second level. 
Sufficient conditions regarding the sampling period for continuous-time model estimation are derived, along with the corresponding error bound. 
Notably, the topology and feedback gain can be accurately inferred up to a scalar ambiguity. 

\item Furthermore, we reconstruct the LQ objective function underpinning the state feedback control process. 
By leveraging inverse optimal control techniques, we explicitly characterize the solution space for the parameters. 
Finally, a constrained quadratic optimization problem is formulated to infer the parameters.  
\end{itemize}





The remainder of this paper is organized as follows.
Section \ref{sec:preliminaries} presents some preliminaries and the inference problem formulation. 
The inference design on the nodal dynamics and the topology are provided in Section \ref{sec:infer_control} along with performance analysis. 
The reconstruction method on the objective function is further in Section \ref{sec:infer_objective}. 
Simulation results are given in Section \ref{sec:simulation}. 
Finally, Section \ref{sec:conclusions} concludes the paper.

\section{Preliminaries}\label{sec:preliminaries}

\subsection{Graph Basics and Notations}
A NDS is represented by a directed graph $\mathcal{G}=(\mathcal{V},\mathcal{E})$, where $\mathcal{V}=\{1, \cdots, N\}$ is the node set and $\mathcal{E}\subseteq \mathcal{V}\times \mathcal{V}$ is the set of interaction edges. 
An edge $(i,j)\in \mathcal{E}$ indicates that node $j$ will send information to node $i$.
The adjacency matrix $\textbf{A}_0=[a_{ij}]_{N \times N}$ of $\mathcal{G}$ is defined such that ${a}_{ij}\!>\!0$ if $(i,j)$ exists or ${a}_{ij}\!=\!0$ otherwise. 
Denote by ${\mathcal{N}_i}=\{j\in \mathcal{V}:a_{ij}>0\}$ the in-neighbor set of $i$. 
Let $L=\operatorname{diag}(\textbf{A}_0\bm{1})-\textbf{A}_0$ be the Laplacian matrix of $\mathcal{G}$, and $L\bm{1}=0$ always holds. 
Specially, we consider that $L$ has a diagonal Jordan form, which is commonly used in cooperative optimal control approaches \cite{movric2014cooperative,zhang2015distributed}.

In this paper, we use $\otimes$ to represent the Kronecker product operator. 
For a matrix $M\in \mathbb{R}^{p\times q}$,  we use $\|\cdot\|$, $\|\cdot \|_{F}$, $\sigma_{\max}$ and $\sigma_{\min}$ to represent its spectral norm, Frobenius norm, maximum and minimum singular values, respectively. 
The notation $\operatorname{vec}(M)\in\mathbb{R}^{pq \times 1}$ means the column vector by sequentially stacking the columns of $M$. 
For the matrix equation $M_1 X M_2=M_3$ where the matrices are with compatible dimensions, the vectorization of the equation satisfies $(M_2^\intercal \otimes M_1)\operatorname{vec}(X)=\operatorname{vec}(M_1 X M_2)=\operatorname{vec}(M_3)$. 
A square matrix $M$ is called simple if it has a diagonal Jordan form. 
For two real-valued functions $f_1$ and $f_2$, $f_1(x)=\bm{O}(f_2(x))$ as $x\to x_0$ means $\mathop {\lim }\nolimits_{x \to x_0 } |f_1(x)/f_2(x)|<\infty$. 

\subsection{System Modeling}

Consider $N$ nodes in a NDS $\mathcal{G}=(\mathcal{V},\mathcal{E})$ share a common state space $\mathbb{R}^{n}$ and aim to cooperatively accomplish a task. 
Let $x_i\in\mathbb{R}^{n}$ be the state of node $i$, subject to the following identical linear time-invariant (LTI) dynamics
\begin{equation}
\left\{
\begin{aligned}\label{eq:node_dynamics}
\dot{x}_{i}&=A {x_i} + B {u_i} \\
{u_i}&=K \sum\nolimits_{j \in \mathcal{N}_i } a_{i j} \left(x_{j}-x_{i}\right) 
\end{aligned}
\right.,
\end{equation}
where $A\in\mathbb{R}^{n\times n}$ and $B\in\mathbb{R}^{n\times m}$ are the nodal system matrices, $u_i \in\mathbb{R}^{m}$ is the control input of node $i$, $a_{ij}$ is the weight of the directed edge from node $j$ to $i$, and $K\in\mathbb{R}^{m\times n}$ is the feedback gain matrix. 
Specifically, $(A, B)$ is controllable, and $B$ has full rank. 
The system model \eqref{eq:node_dynamics} is popular in many consensus-based cooperation tasks, e.g., formation forming and tracking control of mobile robots, frequency regulation in smart grids, and oscillatory synchronization in brain neural networks. 
Correspondingly, the global closed-loop dynamics of the NDS is written as 
\begin{equation}\label{eq:global_a}
\dot{x} = (I_{N} \otimes A - L\otimes {B} K)x \triangleq A_c x,
\end{equation}
where $x\!=\![x_1^\intercal,\cdots,x_N^\intercal]^\intercal \!\in\! \mathbb{R}^{Nn}$ is the global state, 
and $A_c$ is called the global continuous-time closed-loop matrix. 
The stability of \eqref{eq:global_a} is guaranteed by the following lemma. 
\begin{lemma}[see \cite{fax2004information}]\label{lemma:consensus}
Let $\lambda_i,i=1,\cdots,N$, be the eigenvalues of $L$. 
The global system \eqref{eq:global_a} achieves consensus if and only if all the matrices
\begin{equation}
A-\lambda_i{B}K,~i=1,\cdots,N,
\end{equation}
are asymptotically stable. 
\end{lemma}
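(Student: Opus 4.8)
The plan is to exploit the assumed simplicity of the Laplacian $L$ to decouple the global dynamics \eqref{eq:global_a} into $N$ independent low-dimensional subsystems through a Kronecker-structured similarity transformation, after which the asymptotic behavior can be read off blockwise. Since $L$ is assumed to have a diagonal Jordan form, there exists an invertible (in general complex) matrix $T$ such that $T^{-1} L T = \Lambda \triangleq \operatorname{diag}(\lambda_1,\cdots,\lambda_N)$, where $\lambda_i$ are the eigenvalues of $L$ appearing in the statement.

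First I would introduce the change of coordinates $y=(T^{-1}\otimes I_n)x$, so that $x=(T\otimes I_n)y$, and compute the dynamics of $y$. Using the mixed-product property $(M_1\otimes M_2)(M_3\otimes M_4)=(M_1 M_3)\otimes(M_2 M_4)$, the two terms composing $A_c$ transform as $(T^{-1}\otimes I_n)(I_N\otimes A)(T\otimes I_n)=I_N\otimes A$ and $(T^{-1}\otimes I_n)(L\otimes BK)(T\otimes I_n)=\Lambda\otimes BK$. Consequently,
\begin{equation}
\dot{y}=(I_N\otimes A-\Lambda\otimes BK)\,y,
\end{equation}
and because $\Lambda$ is diagonal this matrix is block-diagonal, so the system splits into the $N$ decoupled subsystems $\dot{y}_i=(A-\lambda_i BK)\,y_i$, $i=1,\cdots,N$.

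The next step is to transport stability back and forth across the transformation. Since similarity preserves the spectrum, the eigenvalues of $A_c$ are exactly the union of those of the blocks $A-\lambda_i BK$; equivalently, as $T\otimes I_n$ is invertible, the original trajectory $x(t)=(T\otimes I_n)y(t)$ converges if and only if every transformed mode $y_i(t)$ does. This yields the claimed equivalence: the global system attains its asymptotic objective if and only if each $A-\lambda_i BK$ is asymptotically stable. I would also record that, $L$ being a Laplacian with $L\bm{1}=0$, one eigenvalue (say $\lambda_1=0$) has eigenvector $\bm{1}$, so its block governs the common mode on the consensus manifold spanned by $\bm{1}\otimes I_n$, while the remaining blocks govern the disagreement modes.

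The algebraic decoupling itself is routine once the simplicity of $L$ is invoked, so the main obstacle I anticipate is conceptual rather than computational: carefully reconciling the precise meaning of ``achieving consensus'' with the asymptotic stability of the \emph{full} collection of decoupled blocks, in particular the treatment of the zero-eigenvalue mode and the resulting behavior of the synchronized trajectory. The delicate point is to justify that convergence of all transformed modes is simultaneously necessary and sufficient for the target behavior in the original coordinates, rather than only for the relative states $x_i-x_j$.
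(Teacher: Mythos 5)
The paper does not actually prove this lemma---it is quoted directly from Fax and Murray \cite{fax2004information}---and your similarity transformation $y=(T^{-1}\otimes I_n)x$ exploiting the diagonal Jordan form of $L$, followed by blockwise reading of the decoupled subsystems $\dot{y}_i=(A-\lambda_i BK)y_i$, is exactly the standard argument behind that citation, so your approach is correct and coincides with the intended proof. The one point you rightly flag but leave unresolved is the $\lambda_1=0$ block: as literally stated the lemma requires $A-\lambda_1 BK=A$ itself to be asymptotically stable, which forces the consensus value to be the origin, whereas for protocols with a nontrivial agreement trajectory (e.g.\ $A$ only marginally stable, as in single-integrator consensus) the ``only if'' direction holds only for the disagreement modes $i=2,\dots,N$; a complete write-up should therefore state explicitly which notion of consensus (convergence to zero versus convergence of $x_i-x_j$) the equivalence is asserted for.
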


\subsection{Observation Modeling and Problem of Interest}


Consider an external observer can observe the state trajectory of the NDS with a sampling period $\tau>0$ and zero-order holder. 
Denote $x_i(k)$ as node $i$'s state at instant $k \tau$, and the corresponding observation is given by 
\begin{equation}
{y_i}(k)= {x_i}(k)+{\upsilon_i}(k),
\end{equation}
where ${\upsilon_i}(k)\sim \textbf{N}(0, \Gamma)$ is an independent Gaussian noise vector, with $\Gamma=\operatorname{diag}\{ \sigma_{\upsilon,1}^2, \cdots, \sigma_{\upsilon,n}^2\}$ being the covariance matrix for $n$ different state dimensions. 
Then, by integrating \eqref{eq:global_a} and conducting discretization, 
the corresponding discrete-time model that the observer relies on is given by 
\begin{equation}\label{eq:global_discrete}
\left\{
\begin{aligned}
x(k+1)&= e^{A_c \tau} x(k) \triangleq A_d x(k) \\
y(k)&=x(k)+\upsilon(k) 
\end{aligned}
\right.,
\end{equation}
where ${y}\!=\![y_1^\intercal,y_2^\intercal,\cdots\!,y_N^\intercal]^\intercal \!\in\! \mathbb{R}^{Nn}$, $\upsilon\!=\![\upsilon_1^\intercal,\upsilon_2^\intercal,\!\cdots\!,\upsilon_N^\intercal]^\intercal \!\in\! \mathbb{R}^{Nn}$, 
and the matrix exponential $A_d$ represents the discretized closed-loop system matrix. 
Note that $A_d$ is called stable (including marginally and asymptotically stable) if it belongs to the following set 
\begin{align}
&\mathcal{S}_d\!=\!\{ M: \rho(M)\!<\!1,~\text{or 1 is a semisimple eigenvalue}\nonumber \\
&\text{ while other eigenvalue modulus are less than 1}\}.
\end{align}

Based on the above formulation, the goal of the observer is to infer the cooperative control of the NDS from the noisy observations $\{{y}(k)\}_{k=1}^T$. 
Specifically, the components that needs to be inferred consists of three aspects: i) the nodal dynamics matrices $A$ and $B$, ii) the topology matrix $L$ among different nodes, and iii) the local control feedback gain $K$. 
Mathematically, this goal can be formulated as finding an algorithm mechanism $\mathcal{M}$ 
\begin{equation}\label{problem:noise_observation}
\begin{aligned}
 \mathcal{M}(y(k),~k=1,\cdots, T) \mapsto \{A,B,K,L\}. 
\end{aligned}
\end{equation}
Directly solving the above problem is extremely challenging. 
On the one hand, the four matrices are encoded in the continuous-time closed-loop matrix $A_c$, which requires appropriate sampling period setting and algorithm design to reconstruct $A_c$ from the discrete observations. 
On the other hand, the four unknown matrices are coupled with each other in the matrix $A_c$, indicating further decoupling techniques need to be developed. 

With $\mathcal{M}$ obtained, our secondary goal of interest is to infer the objective function parameters behind the cooperative control law \eqref{eq:global_a}, considering the global state-feedback control is designed by optimal LQ control methods. 
Solving this problem will provide another perspective to interpret the cooperation pattern of the NDS from its associated cost functions, e.g., it can reflect the importance of different nodal states and inputs in the control process.


\section{Infer the Cooperative Control Parameters}\label{sec:infer_control}

In this section, we first demonstrate how to design the estimator for the discretized system matrix along with performance analysis. 
Then, we give a sufficient condition for reconstructing the continuous system matrix from the discretized one, and develop a relevant quadratic programming based algorithm. 
Finally, we provide least square based decoupling procedures to obtain the cooperative control parameters, respectively. 
To ease notation in the following estimator design and analysis, we organize the state/observation/noise vectors from time $0$ to $T$ into corresponding matrix forms, given by 
\begin{equation*}
\begin{aligned}
X_T^- &\!=\![x(0),x(1),\!\cdots\!,x(T-1)] ,~X_T^+ \!=\![x(1),x(2),\!\cdots\!,x(T)] , \\
Y_T^- &\!=\![y(0),y(1),\!\cdots\!,y(T-1)] ,~Y_T^+ \!=\![y(1),y(2),\!\cdots\!,y(T)], \\ 
\Phi_T^- &\!=\![\upsilon(0),\upsilon(1),\!\cdots\!,\upsilon(T-1)], ~ \Phi_T^+ \!=\![\upsilon(1),\!\cdots\!,\upsilon(T)]. 
\end{aligned}
\end{equation*}

\subsection{Discretized Closed-loop System Matrix Estimation}

In this part, the discretized system matrix is developed based on the observations, while taking the state evolution characteristic into account. 

Despite the achieved cooperation of the NDS, the state evolution can be varying and complex due to the high-order nodal dynamics. 
Particularly, if the closed-loop matrix $A_d$ is stable (i.e., $A_d\in \mathcal{S}_d$), it follows from using Jordan decomposition on $(A_d)^k$ that the state $x (t)=(A_d)^k x(0)$ will converge to a constant vector exponentially. 
For simplicity, we do not consider very special cases where the configuration of an initial state and an unstable $A_d$ leads to a converging state. 
Then, from the discrete observations, we introduce the following constant cooperation pattern that corresponds to $A_d \in \mathcal{S}_d$. 
\begin{definition}[Constant cooperation pattern]\label{def:constant}
The system \eqref{eq:global_a} is called to achieve a constant cooperation pattern, if given a non-zero bounded initial state, there exists a constant vector $\bm{c}\in\mathbb{R}^{n\times 1}$ such that 
\begin{equation}\label{eq:constant}
\mathcal{C}(x):~\lim_{t \to \infty } \| x_i(t)- \bm{c} \|=0,~\forall i\in\mathcal{V}. 
\end{equation}
\end{definition}

Although in practice whether $A_d\in \mathcal{S}_d$ is not known beforehand, it can be empirically determined from the observations. 
We introduce a time window with length of $L (0<L<T)$ ending at instant $T$, and define the following first-order difference based empirical error
\begin{equation}\label{eq:criteria_error}
\epsilon_1=\sum\limits_{k=T-L}^{T-1} \frac{ \left\| y(k+1)-y(k) \right \| }{L} . 
\end{equation}
Then, the judging indicator for the constant cooperation pattern is given by
\begin{equation}\label{eq:criteria_rho}
\hat{\mathbb{I}}(A_d \in \mathcal{S}_d)=\left\{
\begin{aligned}
&1,&&\text{if}~\epsilon_1 \le \epsilon \\
&0, &&\text{otherwise}
\end{aligned}\right.,
\end{equation}
where $\epsilon>0$ is a preset decision threshold. 
Notice that $\hat{\mathbb{I}}(A_d \in \mathcal{S}_d)$ can effectively determine the stability of $A_d$ with sufficiently long observation horizon $T$. 
As the NDS will converge exponentially to a constant vector, the term $\|y(k+1)-y(k)\|$ can be regarded as the norm of a sum of two independent noises. 
Hence, the value of $\epsilon_1$ can be chosen dependent on the variance of the noise, e.g., setting as $3\sqrt{2}\max\{\sigma_{\upsilon,\ell},\ell=1,\cdots,n\}$ by the famous $3\sigma$-rule \cite{pukelsheim1994three}. 
Inspired by the causality-based topology estimator proposed by \cite{li2023topology}, we define the observation covariance matrix and its one-lag version as
\begin{equation}\label{eq:sample11}
\begin{aligned}
\Sigma_0(T)=\frac{1}{T}(Y_T^-) (Y_T^-)^\intercal,~\Sigma_1(T)=\frac{1}{T}(Y_T^+) (Y_T^-)^\intercal.
\end{aligned}
\end{equation}
Based on the two cases indicated in \eqref{eq:criteria_rho}, we develop the following estimator to infer $A_d$, given by 
\begin{equation}\label{eq:discrete_estimator}
\!\!\widehat{A}_{d}(T) \!=\!\left\{
\begin{aligned}
& \Sigma_1(T) \left( \Sigma_0(T) \!-\!I_N \!\otimes\! \Gamma \right)^{-1},\text{if}~\hat{\mathbb{I}}(A_d \in \mathcal{S}_d)=1\\
& \Sigma_1(T) ( \Sigma_0(T) )^{-1},\text{otherwise}
\end{aligned}\right. .
\end{equation}
The main difference of the two rows in \eqref{eq:discrete_estimator} lies in that if $A_d \in \mathcal{S}_d$ holds, the influence of the observation noises can be effectively eliminated in a de-regularization way (i.e., subtracting $I_N \otimes \Gamma$ from $\Sigma_0(T)$). 
Detailed derivations can be witnessed in the proof of the following Theorem \ref{th:matrix_error}. 

\begin{remark}\label{rema:invertibility}
The key of estimator \eqref{eq:discrete_estimator} lies in the invertibility of $\Sigma_0(T)$. 
Note that the observations $\{ y(k) \}_{k=1}^T$ are corrupted by independent random noises. 
Based on Sard's theorem in measure theory, both $Y_T^-$ and $Y_T^+$ have full rank almost surely. 
Since $I_N \otimes \Gamma$ is a diagonal matrix and we do not consider the case that the state magnitude is in the same scale as the noise magnitude, the invertibility of $(\Sigma_0(T) - I_N \otimes \Gamma)$ is also guaranteed almost surely.  
\end{remark}

With the feasibility of $\widehat{A}_{d}(T)$ guaranteed, we further present its asymptotic performance as the following theorem. 
\begin{theorem}\label{th:matrix_error}
Suppose that the indicator \eqref{eq:criteria_rho} can correctly determine whether $A_d \in \mathcal{S}_d$. 
When the observation horizon $T\to\infty$, the inference error of estimator \eqref{eq:discrete_estimator} satisfies
\begin{equation}
\left\{
\begin{aligned}
&\Pr \left\{\mathop {\lim }\limits_{T \to \infty } \| \widehat{A}_d(T) - {A_d} \|_F^2 \!=\! 0 \right\}\!=\!1,~\text{if}~A_d \in \mathcal{S}_d \\
&\Pr \left\{\mathop {\lim }\limits_{T \to \infty } \| \widehat{A}_d(T) - {A_d} \|_F^2 \!=\! \bm{O}(1) \right\}\!=\!1,~\text{otherwise}\\
\end{aligned}\right..
\end{equation}
\end{theorem}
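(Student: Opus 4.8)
The plan is to reduce the estimation error to a product of a vanishing ``numerator'' matrix and an inverse covariance factor, and then control each piece via the strong law of large numbers (SLLN). First I would substitute the observation decomposition $Y_T^\pm = X_T^\pm + \Phi_T^\pm$ together with the deterministic relation $X_T^+ = A_d X_T^-$ into the sample covariances \eqref{eq:sample11}. Writing $M(T) = \Sigma_0(T) - I_N\otimes\Gamma$ for the stable branch of \eqref{eq:discrete_estimator}, the identity
\[
\widehat{A}_d(T) - A_d = \big(\Sigma_1(T) - A_d M(T)\big)M(T)^{-1}
\]
isolates the error, and after cancelling the common $A_d X_T^-(X_T^-)^\intercal$ and $A_d X_T^-(\Phi_T^-)^\intercal$ blocks the numerator collapses to purely noise-driven cross terms,
\[
\Sigma_1(T) - A_d M(T) = \tfrac1T\Phi_T^+(X_T^-)^\intercal - \tfrac1T A_d\Phi_T^-(X_T^-)^\intercal + \tfrac1T\Phi_T^+(\Phi_T^-)^\intercal - A_d\big(\tfrac1T\Phi_T^-(\Phi_T^-)^\intercal - I_N\otimes\Gamma\big).
\]

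Next I would invoke the SLLN term by term. Since the $\upsilon(k)$ are independent zero-mean Gaussians, $\tfrac1T\Phi_T^-(\Phi_T^-)^\intercal\to I_N\otimes\Gamma$ almost surely, which is exactly why subtracting $I_N\otimes\Gamma$ de-biases the last block; the lag-one product $\tfrac1T\Phi_T^+(\Phi_T^-)^\intercal$ is a mean-zero $1$-dependent average and tends to $0$ almost surely; and the state-noise cross terms are weighted sums of independent zero-mean variables with bounded weights, since the state is bounded under the constant cooperation pattern of Definition~\ref{def:constant}, so Kolmogorov's criterion $\sum_k \mathrm{Var}(\cdot)/k^2 <\infty$ forces them to $0$ almost surely. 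Hence the numerator $\Sigma_1(T) - A_d M(T)\to 0$ almost surely.

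The hard part is passing from a vanishing numerator to a vanishing product, because under consensus the trajectory collapses onto the single vector $\bm1_N\otimes\bm c$, so $\tfrac1T X_T^-(X_T^-)^\intercal\to(\bm1_N\otimes\bm c)(\bm1_N\otimes\bm c)^\intercal$ is rank one and $M(T)$ becomes ill-conditioned: its smallest $Nn-1$ eigenvalues shrink at the noise rate and $\|M(T)^{-1}\|$ diverges. I would therefore avoid the naive bound $\|\mathrm{num}\|\,\|M(T)^{-1}\|$ and instead split $M(T)^{-1} = M(T)^{-1}P + M(T)^{-1}P^\perp$ along the projector $P$ onto $\mathrm{span}(\bm1_N\otimes\bm c)$. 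On the range of $P$ the factor is bounded, so the vanishing numerator already suffices; on $P^\perp$ I would exploit that the leading rank-one parts of the state-noise cross terms have row space $\bm1_N\otimes\bm c$ and are annihilated by $P^\perp$, leaving genuinely smaller $\bm{O}(1/T)$ residuals, while the transient deviations $x(k)-(\bm1_N\otimes\bm c)$ are exponentially summable and thus feed an $\bm{O}(1/T)$ excitation into $M(T)$ on $P^\perp$. Reconciling these rates so that the $P^\perp$ contribution still vanishes is the main obstacle, and it is precisely here that stability $A_d\in\mathcal{S}_d$ must be used in full strength.

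For the ``otherwise'' branch the estimate should be easier. With $\Sigma_0(T)^{-1}$ and no de-regularization, the same algebra leaves the undeleted block $-A_d\,\tfrac1T\Phi_T^-(\Phi_T^-)^\intercal\to -A_d(I_N\otimes\Gamma)$ inside the numerator, a nonzero constant, so consistency cannot be expected. Instead I would argue that when $A_d\notin\mathcal{S}_d$ the state energy $\tfrac1T X_T^-(X_T^-)^\intercal$ grows without bound in the excited modes, $\Sigma_0(T)$ dominates the noise there, and $\Sigma_0(T)^{-1}$ stays bounded, so the residual noise bias projected onto the non-growing modes leaves the whole product almost surely bounded, i.e. $\bm{O}(1)$, which is the claimed statement.
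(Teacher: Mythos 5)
Your decomposition of the numerator in the stable branch is exactly the one the paper uses, and your SLLN arguments for the individual noise blocks (the $1$-dependent lag product tending to zero, the de-biasing of $\frac1T\Phi_T^-(\Phi_T^-)^\intercal$ by $I_N\otimes\Gamma$, Kolmogorov's criterion for the state--noise cross terms) are sound. But your proof of the first branch does not close: you correctly observe that under consensus $\frac1T X_T^-(X_T^-)^\intercal$ collapses to rank one, so $M(T)=\Sigma_0(T)-I_N\otimes\Gamma$ degenerates and the naive bound $\|\Sigma_1(T)-A_dM(T)\|\,\|M(T)^{-1}\|$ fails, and you then explicitly leave ``reconciling these rates'' on the $P^\perp$ subspace as an open obstacle. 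That reconciliation \emph{is} the content of the stable case, so as written that branch is asserted rather than proved. (For what it is worth, the paper itself simply passes to the limit in the identity $\Sigma_1(\infty)=A_d(\Sigma_0(\infty)-I_N\otimes\Gamma)$ and inverts, without addressing the degeneracy you raise; your observation is a fair criticism of that step, but flagging the gap is not the same as filling it.)

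The second branch is where your route genuinely diverges from the paper's and where the missing ingredients are concrete. The paper does not reason about excited versus non-growing modes at all: it writes $E_1=(\Phi_T^+-A_d\Phi_T^-)(Y_T^-)^\intercal\bigl(Y_T^-(Y_T^-)^\intercal\bigr)^{-1}$, bounds $\|E_1\|\le\bigl(\|\Phi_T^+\|+\|A_d\|\,\|\Phi_T^-\|\bigr)/\sigma_{\min}(Y_T^-)$ via the SVD of $Y_T^-$, and then invokes two quantitative random-matrix facts: (i) since $Y_T^-=X_T^-+\Phi_T^-$ is a deterministic shift of a Gaussian matrix, Tikhomirov's smallest-singular-value theorem gives $\sigma_{\min}(Y_T^-)\ge u\sqrt{T}$ with probability at least $1-e^{-vT}$ \emph{regardless of how the state behaves}; (ii) Gaussian concentration gives $\sigma_{\max}(\Phi_T^\pm)\le 2\sqrt{T}+\sqrt{Nn}$ with probability at least $1-2e^{-T/2}$. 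The ratio is then $\bm{O}(1)$ with probability tending to one. Your sketch instead attributes the boundedness of $\Sigma_0(T)^{-1}$ to the growth of the state energy, which points in the wrong direction: the required lower bound on $\sigma_{\min}(Y_T^-)$ comes from the noise floor, not from the excitation (a growing state only helps in some directions and says nothing about the small singular values), and without a bound of the form $\sigma_{\min}(Y_T^-)\gtrsim\sqrt{T}$ your argument about residual bias on the non-growing modes cannot yield the claimed almost-sure $\bm{O}(1)$ conclusion.
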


\begin{proof}
First, we consider the constant cooperation pattern case when $A_d \in \mathcal{S}_d$. 
Note that the sample matrix $\Sigma_1(T)$ can be written as 
\begin{align}\label{eq:case1_expression}
\Sigma_1(T)=& \frac{1}{T} \sum\limits_{t = 1}^{T }\left(A_d y_{t-1} + \upsilon_{t}-A_d \upsilon_{t-1}\right)y_{t-1}^\intercal \nonumber \\
=& A_d \Sigma_0(T) \! + \! \frac{1}{T} ({\Phi_T^+} - A_d {\Phi_T^-}) ({Y_T^-})^{\intercal} \nonumber \\
=& A_d \Sigma_0(T) + \frac{ {\Phi_T^+} ({X_T^-})^{\intercal} }{T} + \frac{ {\Phi_T^+} ({\Phi_T^-})^{\intercal} }{T} \nonumber \\
&- \frac{ A_d {\Phi_T^-} ({X_T^-})^{\intercal} }{T} - \frac{ A_d {\Phi_T^-} ({\Phi_T^-})^{\intercal} }{T}.
\end{align}
Notice that the elements in $X_T^-$ are bounded when $A_d \in \mathcal{S}_d$, and ${\Phi_T^+}$ is independent of ${X_T^-}$ and ${\Phi_T^-}$. 
Then, when $T\to\infty$, we have with probability one that 
\begin{align}\label{eq:limit_expression}
\Sigma_1(\infty)= & A_d \Sigma_0(\infty) + 0 + 0 -0 -A_d (I_N \!\otimes\! \Gamma) \nonumber \\
\Rightarrow A_d = & \Sigma_1(\infty)(\Sigma_0(\infty) -I_N \!\otimes\! \Gamma)^{-1},
\end{align}
which proves the first conclusion.


Next, we focus on the case when $A_d \notin \mathcal{S}_d$. 
In this situation, the term $ \lim_{T\to\infty}{\Phi_T^+} ({X_T^-})^{\intercal} /{T}$ will not converge to zero, and thus \eqref{eq:case1_expression} no longer holds. 
To deal with this issue, we consider the singular value decomposition (SVD) of $Y_T^-$, 
\begin{equation}
Y_T^-=U \Lambda V^\intercal,
\end{equation}
where $U\in\mathbb{R}^{Nn \times Nn}$ is a unitary matrix and $V\in\mathbb{R}^{T \times Nn }$ a block of a unitary matrix, and $\Lambda\in\mathbb{R}^{Nn \times Nn}$ is a diagonal matrix consisting of the singular values of $Y_T^-$. 
Then, denote the inference error matrix as $E_{1}= \widehat{A}_d(T) - {A_d}$, and it satisfies
\begin{align}
E_{1}=& ({\Phi_T^+} - A_d {\Phi_T^-}) ({Y_T^-})^{\intercal} ( Y_T^- ({Y_T^-})^{\intercal} )^{-1} \nonumber \\
=&  ({\Phi_T^+} - A_d {\Phi_T^-}) V \Lambda U^\intercal (U \Lambda V^\intercal   V \Lambda U^\intercal)^{-1} \nonumber \\
=&  ({\Phi_T^+} - A_d {\Phi_T^-}) V \Lambda^{-1} U^\intercal  \nonumber \\
\Rightarrow ~ \| E_{1}\| \le & \|{\Phi_T^+} - A_d {\Phi_T^-}\| \cdot \| V\| \cdot \| \Lambda^{-1}\| \cdot \|U^\intercal\|  \nonumber \\
\le & \frac{\|\Phi_T^+\| + \|A_d\| \|{\Phi_T^-}\| }{\sigma_{\min}(Y_T^-)},
\end{align}
where $\sigma_{\min}(\cdot) $ is the minimum singular value of a matrix.  

Recalling $Y_T^-=X_T^- + \Phi_T^-$, since $X_T^-$ is deterministic and independent of $\Phi_T^-$, $Y_T^-$ can be regarded as a shifted version of the random matrix $\Phi_T^-$ with independent entries. 
Therefore, we can directly apply the Theorem 1 in \cite{tikhomirov2016smallest} to obtain a lower bound for $\sigma_{\min}(Y_T^-)$ as
\begin{equation}\label{eq:pr_bound1}
\Pr\{ \sigma_{\min}(Y_T^-)\ge u\sqrt{T} \}\ge 1- \exp(-vT),
\end{equation}
where $u$ and $v$ are positive constants that do not depend on $T$. 
Next, for the noise matrix $\Phi_T^+$ where the entries are i.i.d. zero-mean Gaussian noises, we can apply the concentration of measure in Gaussian space \cite{davidson2001local} to $\Phi_T^+$ and obtain that with probability at least $1-2 \exp \left(-r^{2} / 2\right)$ 
\begin{equation}
 \sigma_{\max}(\Phi_T^+) \leq  \sqrt{T}+\sqrt{Nn}+r. 
\end{equation}
By setting $r=\sqrt{T}$, we directly have 
\begin{equation}\label{eq:fenzi}
\Pr\{ \sigma_{\max}(\Phi_T^+) \leq 2 \sqrt{T}+\sqrt{Nn}  \}  \geq 1-2 \exp \left(-T / 2\right) .
\end{equation}
The above upper bound with probability guarantee also holds for $\sigma_{\max}(\Phi_T^-)$.

Finally, considering the union probability of \eqref{eq:pr_bound1} and \eqref{eq:fenzi}, we have with probability at least $(1-2 \exp (-T / 2))(1- \exp(-vT))$  that 
\begin{align}
\| E_{1}\| \le & \frac{ (1 + \|A_d\|)  \| \sigma_{\max}(\Phi_T^+) \|   }{  \sigma_{\min}(Y_T^-) }  \!\le\! \frac{ (1 \!+\! \|A_d\|) ( 2 \sqrt{T} \!+\!\sqrt{Nn} )  }{  u\sqrt{T} }  \nonumber \\ 
<&  \frac{ (1 + \|A_d\|) ( 3 \sqrt{T}  )  }{ u\sqrt{T} }  = \frac{ 3(1 + \|A_d\|)  }{ u }=\bm{O}(1). 
\end{align}
Specifically, when $T\to\infty$, the probability becomes one, which completes the proof. 
\end{proof}

Theorem \ref{th:matrix_error} shows that the asymptotic inference error of $\widehat{A}_d(T)$ is unbiased when $A_d \in \mathcal{S}_d$, while it is upper bounded by a constant when $A_d \notin \mathcal{S}_d$. 
The latter outcome is caused by the non-converging state, making $ \lim_{T\to\infty}{\Phi_T^+} ({X_T^-})^{\intercal} /{T} \neq 0$. 
Indeed, if one can find a way to process the observations and make the processed observations exhibit convergence, then it is possible to construct an unbiased estimator. 
Following this idea, we next show how to deal with a special case of $A_d \notin \mathcal{S}_d$, where the node states evolve in a linear growth pattern asymptotically, 
i.e., there exists a constant vector $\bm{c}\in\mathbb{R}^{n\times 1}$ ($\bm{c}\neq\bm{0}$) such that 
\begin{equation}\label{eq:linear_speed}
\lim_{t \to \infty } \| x_i(t)- \bm{c}t \|=0,~\forall i\in\mathcal{V}. 
\end{equation}
For the case \eqref{eq:linear_speed}, we introduce the following second-order difference based empirical error within a time window $T$
\begin{equation}\label{eq:criteria_second}
\epsilon_2= \sum\limits_{k=T-L}^{T-1} \frac{ \left \| y(k+1)-2y(k)+y(k-1) \right \|}{L}. 
\end{equation}
If $\epsilon_1>\epsilon$ and $\epsilon_2<\epsilon$ hold simultaneously, it is regarded that the NDS evolves in a linear growth. 
Then, similar to $\Sigma_0(T)$ and $\Sigma_1(T)$, define the following matrices 
\begin{align}
Y_{\Delta,T}^- \!=\! [\Delta y(0),\Delta y(1),\!\cdots\!,\Delta y(T-2)] \in \mathbb{R}^{Nn \times (T-1)}, \nonumber  \\
Y_{\Delta,T}^+ \!=\! [\Delta y(1),\Delta y(2),\!\cdots\!,\Delta y(T-1)]  \in \mathbb{R}^{Nn \times (T-1)},  \nonumber 
\end{align}
where $\Delta y(t)\!=\!y(t+1)- y(t), t=0,\cdots\!,T\!-\!1$ is the observation difference at instant $t$. 
Correspondingly, the covariance matrix for $\Delta y(t)$ and its one-lag version are given by
\begin{equation}\label{eq:delta_covariance}
\Sigma_{\Delta,0}(T)\!=\!\frac{1}{T}(Y_{\Delta,T}^-) (Y_{\Delta,T}^-)^\intercal,\Sigma_{\Delta,1}\!=\!\frac{1}{T}(Y_{\Delta,T}^+) (Y_{\Delta,T}^-)^\intercal.
\end{equation}
\begin{corollary}\label{lemma:linear_case}
For the linear growth case \eqref{eq:linear_speed}, 
define the estimator for $A_c$ as $\widehat{A}_d^s(T)=(\Sigma_{\Delta,1}(T)+ I_N \!\otimes\! \Gamma)(\Sigma_{\Delta,0}(T)-2 I_N \!\otimes\! \Gamma )^{-1}$, then, with probability one, we have
\begin{equation}\label{eq:inty_performance}
\mathop {\lim }\limits_{T \to \infty } \| \widehat{A}_d^s(T) - A_d \|=0.
\end{equation}
\end{corollary}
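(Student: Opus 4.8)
The plan is to exploit the observation that differencing the trajectory converts the linear-growth regime into the constant cooperation pattern already handled in the first case of Theorem \ref{th:matrix_error}. First I would note that the clean differenced state $\Delta x(t)=x(t+1)-x(t)$ inherits the same dynamics: since $x(t+1)=A_d x(t)$, we have $\Delta x(t+1)=A_d\,\Delta x(t)$, so the one-lag linear relation between successive differences is governed by the very matrix $A_d$ we wish to recover. Moreover, the linear-growth assumption \eqref{eq:linear_speed} forces $\Delta x_i(t)\to\bm{c}$ for every $i\in\mathcal{V}$, i.e. $\Delta x(t)\to\bm{1}_N\otimes\bm{c}=:\bar{v}$, which is bounded. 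Hence the differenced state satisfies a constant cooperation pattern in the sense of Definition \ref{def:constant}, and the obstruction $\lim_{T\to\infty}\Phi_T^+(X_T^-)^\intercal/T\neq0$ that spoiled the second case of Theorem \ref{th:matrix_error} disappears.

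Second I would track the noise. Writing the differenced observation as $\Delta y(t)=\Delta x(t)+w(t)$ with $w(t)=\upsilon(t+1)-\upsilon(t)$, the process $w(t)$ is a moving-average sequence whose second-order statistics are $\mathbb{E}[w(t)w(t)^\intercal]=2I_N\otimes\Gamma$ and, crucially, $\mathbb{E}[w(t+1)w(t)^\intercal]=-I_N\otimes\Gamma$, because consecutive differences share one noise sample. Since $\Delta x(t)$ is now bounded, the same strong-law arguments used in the first case of Theorem \ref{th:matrix_error} apply term by term: the state--noise cross sums vanish almost surely, while $\frac1T\sum_t w(t)w(t)^\intercal\to 2I_N\otimes\Gamma$ and $\frac1T\sum_t w(t+1)w(t)^\intercal\to -I_N\otimes\Gamma$. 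Substituting $\Delta x(t+1)=A_d\Delta x(t)$ into $\Sigma_{\Delta,1}(T)$ and collecting limits yields, with probability one, $\Sigma_{\Delta,0}(\infty)=\Sigma^x+2I_N\otimes\Gamma$ and $\Sigma_{\Delta,1}(\infty)=A_d\Sigma^x-I_N\otimes\Gamma$, where $\Sigma^x=\lim_T\frac1T\sum_t\Delta x(t)\Delta x(t)^\intercal$. The additive corrections $+I_N\otimes\Gamma$ and $-2I_N\otimes\Gamma$ built into $\widehat{A}_d^s(T)$ are designed precisely to cancel these two noise biases, so that formally $(\Sigma_{\Delta,1}(\infty)+I_N\otimes\Gamma)(\Sigma_{\Delta,0}(\infty)-2I_N\otimes\Gamma)^{-1}=A_d\Sigma^x(\Sigma^x)^{-1}=A_d$.

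The hard part, and the step I would treat most carefully, is the final inversion. Since every node difference converges to the same vector $\bar v$, the transient modes are averaged away and $\Sigma^x=\bar v\bar v^\intercal$ is rank one; thus $\Sigma_{\Delta,0}(\infty)-2I_N\otimes\Gamma$ is singular and the literal ``take the limit, then invert'' manipulation above is not justified. A rigorous argument instead keeps $T$ finite: using the exact identity $\Sigma_{\Delta,1}(T)+I_N\otimes\Gamma=A_d(\Sigma_{\Delta,0}(T)-2I_N\otimes\Gamma)+E(T)$ with residual $E(T)\to0$ almost surely, one writes $\widehat{A}_d^s(T)-A_d=E(T)(\Sigma_{\Delta,0}(T)-2I_N\otimes\Gamma)^{-1}$, where the inverse exists almost surely for finite $T$ by the Sard-type argument of Remark \ref{rema:invertibility}. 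The difficulty is that $E(T)$ decays only at the $\bm{O}(T^{-1/2})$ fluctuation rate, while $\|(\Sigma_{\Delta,0}(T)-2I_N\otimes\Gamma)^{-1}\|$ grows at the same $\bm{O}(T^{1/2})$ rate along the directions orthogonal to $\bar v$, so the product need not vanish without further control. I would close this gap either by invoking a persistent-excitation-type condition ensuring the decaying transient modes keep $\Sigma_{\Delta,0}(T)-2I_N\otimes\Gamma$ well conditioned, or by restricting the convergence claim to the identifiable range of $\Sigma^x$; quantifying this competition between the vanishing residual and the ill-conditioned inverse is the crux of the proof.
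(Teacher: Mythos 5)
Your derivation is, in substance, exactly the paper's proof: expand $\Sigma_{\Delta,1}(T)$ via $\Delta y_{t+1}=A_d\Delta y_t-A_d\Delta\upsilon_t+\Delta\upsilon_{t+1}$, note that the state--noise cross sums vanish because $\Delta x_t$ is bounded under \eqref{eq:linear_speed}, and identify the almost-sure limits $\frac1T\sum_t\Delta\upsilon_{t+1}\Delta\upsilon_t^\intercal\to-I_N\otimes\Gamma$ and $\frac1T\sum_t\Delta\upsilon_t\Delta\upsilon_t^\intercal\to2\,I_N\otimes\Gamma$, which the $+I_N\otimes\Gamma$ and $-2I_N\otimes\Gamma$ corrections in $\widehat{A}_d^s$ are built to cancel; this reproduces \eqref{eq:vv_expand0}--\eqref{eq:vv_expand}. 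Where you diverge is the final inversion, and your concern there is legitimate: since $\Delta x_t\to\bar v$, the Ces\`aro limit gives $\Sigma_{\Delta,0}(\infty)-2I_N\otimes\Gamma=\bar v\bar v^\intercal$, a rank-one (hence singular, for $Nn>1$) matrix, so the limit identity $\Sigma_{\Delta,1}(\infty)+I_N\otimes\Gamma=A_d\bigl(\Sigma_{\Delta,0}(\infty)-2I_N\otimes\Gamma\bigr)$ does not by itself determine $A_d$, and the competition you describe between the $\bm{O}(T^{-1/2})$ residual and the $\bm{O}(T^{1/2})$ growth of $\|(\Sigma_{\Delta,0}(T)-2I_N\otimes\Gamma)^{-1}\|$ on the orthogonal complement of $\bar v$ means convergence of the product is not automatic. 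The paper does not address this: its proof simply passes to the limit and formally inverts, with invertibility argued only for finite $T$ (Remark \ref{rema:invertibility}). So you should not expect to find the missing persistent-excitation or range-restriction argument in the paper -- your write-up matches the published proof everywhere the published proof is explicit, and the step you flag as the crux is precisely the step the paper leaves unjustified. If you want a complete argument you will indeed need the additional control you sketch (e.g., restricting the convergence claim to the range of $\bar v\bar v^\intercal$, or a conditioning assumption on the transient), since the stated computation alone does not deliver \eqref{eq:inty_performance}.
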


\begin{proof}
First, when no confusion is caused, we will alternatively use $\Delta y_{t}$ for $\Delta y(t)$, while the convergence stated in this proof refers to almost sure convergence. 
By the definition of $\Sigma_{\Delta,1}(T)$, it can be expanded as 
\begin{align}\label{eq:vv_expand0}
&\Sigma_{\Delta,1}(T)= \frac{1}{T} \sum_{t=0}^{T-1} {\Delta y_{t+1}} {\Delta y_{t}^\intercal} \nonumber \\
=& \frac{1}{T} \sum_{t=0}^{T-2} \left(  A_d {\Delta y_{t}} - A_d {\Delta\upsilon_t}  + {\Delta\upsilon_{t+1} }   \right) {(\Delta x_{t} + {\Delta\upsilon_t}  )^\intercal}  \nonumber \\
=& A_d \Sigma_{\Delta,0}(T+1) +  \sum_{t=0}^{T-2} \frac{  \left( {\Delta\upsilon_{t+1} }  - A_d {\Delta\upsilon_t}  \right) {(\Delta x_{t} + {\Delta\upsilon_t}  )^\intercal}  }{T} ,
\end{align}
where $\Delta\upsilon_t=\upsilon_{t+1} - \upsilon_{t}$ is the noise difference at instant $t$.  
Note that when the NDS is subject to the linear growth \eqref{eq:linear_speed}, the second term in \eqref{eq:vv_expand0} will converge as $T\to\infty$, given by 
\begin{align}\label{eq:vv_expand_part2}
&\lim_{T \to \infty }  \frac{  \sum_{t=0}^{T-2} \left( {\Delta\upsilon_{t+1} }  - A_d {\Delta\upsilon_t}  \right) {(\Delta x_{t} + {\Delta\upsilon_t}  )^\intercal}  }{T} \nonumber \\
=& \lim_{T \!\to\! \infty } \frac{  \sum\limits_{t=0}^{T-2} \!\! {\Delta\upsilon_{t+1}} {\Delta x_{t}^{\intercal}} \!+\! {\Delta\upsilon_{t+1}} {\Delta\upsilon_t^{\intercal}}  \!-\! A_d {\Delta\upsilon_t} {\Delta x_{t}^{\intercal}} \!-\! A_d {\Delta\upsilon_t}{\Delta\upsilon_t^{\intercal}}  }{T} \nonumber \\
=& 0+ \lim_{T \to \infty } \sum_{t=0}^{T-2}\frac{ (\upsilon_{t+2} -\upsilon_{t+1}) (\upsilon_{t+1} -\upsilon_{t})^{\intercal}  }{T}  \nonumber \\
&- 0 - \lim_{T \to \infty } \sum_{t=0}^{T-2}\frac{ A_d(\upsilon_{t+1} -\upsilon_{t}) (\upsilon_{t+1} -\upsilon_{t})^{\intercal}  }{T}   \nonumber \\
=&0 - I_N \!\otimes\! \Gamma - 0 -  A_d \cdot 2 (I_N \!\otimes\! \Gamma),
\end{align}
where $\lim\limits_{T \to \infty } \sum_{t=0}^{T-2} \frac{{\Delta\upsilon_{t+1}} {\Delta x_{t}^{\intercal}}}{T} = \lim\limits_{T \to \infty } \sum_{t=0}^{T-2} \frac{A_d {\Delta\upsilon_t} {\Delta x_{t}^{\intercal}}}{T}=0 $ holds because $\Delta x_{t}$ is bounded in the linear growth case. 
Therefore, substituting \eqref{eq:vv_expand_part2} into \eqref{eq:vv_expand0} with $T \!\to\! \infty $, we have 
\begin{align}\label{eq:vv_expand}
\!\!\!&\Sigma_{\Delta,1}(\infty)=  A_d \Sigma_{\Delta,0}(\infty) - I_N \!\otimes\! \Gamma  - A_d \cdot 2 I_N \!\otimes\! \Gamma \nonumber \\
\!\!\! \Rightarrow & \Sigma_{\Delta,1}(\infty) + I_N \!\otimes\! \Gamma =  A_d \left( \Sigma_{\Delta,0}(\infty) -  2 I_N \!\otimes\! \Gamma \right)  \nonumber \\
\!\!\! \Rightarrow& A_d \!=\! (\Sigma_{\Delta,1}(\infty) \!+\! I_N \!\otimes\! \Gamma)( \Sigma_{\Delta,0}(\infty) \!-\!  2 I_N \!\otimes\! \Gamma)^{-1} \!=\! \widehat{A}_d^s(\infty),
\end{align}
which further verifies \eqref{eq:inty_performance} and completes the proof. 
\end{proof}

Lemma \ref{lemma:linear_case} shows that if the state trajectory of the system exhibits a linear growth in the asymptotic sense, one can use the difference of two consecutive states to remove the linear growth and construct an unbiased estimator for the matrix $A_d$. 
More generally, if the state trajectory has a polynomial growth asymptotically, then one can first adopt the polynomial fitting to obtain the trajectory vector representation $f(k)$ for the state. 
Then, one further use the filtered observations $\{y(k)-f(k),k=0,\cdots,T-1\}$ to construct the following equations and infer $A_d$
\begin{equation}\label{eq:series_eq}
y(k)\!-\!f(k)=A_d(y(k-1) \!-\!f(k-1)),k=1,\cdots,T. 
\end{equation}
Notice that in this case the inference error will largely depend on the fitting error in the first step. 
If the polynomial fitting $f(t)$ is unbiased in the asymptotic sense, then the inferred $A_d$ by solving \eqref{eq:series_eq} is also unbiased. 


\subsection{Reconstruct the Continuous-time System Matrix}


After the global discrete system matrix $A_d$ is inferred, 
this part presents the conditions and the reconstruction method for the continuous-time system matrix $A_c$ from a given $A_d$. 

To begin with, it is straightforward for one to come up with using the first-order difference of two observations to approximate the original $A_c$, i.e., supposing 
\begin{equation}\label{eq:appro}
\!\frac{x(k\!+\!1)\!-\!x(k)}{\tau} \approx \dot{x}(k)  \!=\! A_c x(k) \Rightarrow x(k\!+\!1)\!\approx\!(\tau A_c \!+\!I ) x(k).
\end{equation}
Then, an estimate of $A_c$ by \eqref{eq:appro} is $(A_d\!-\!I)/\tau$. 
However, this approximation only approaches the real $A_c$ when $\tau\to0$. 
To overcome the above drawback, we resort to the tool of matrix logarithm to reconstruct $A_c$ more reliably and derive a sufficient condition for the sampling period, presented in the following result.

\begin{theorem}\label{th:reconstruction}
Given an accurate $A_d$ under the sampling period $\tau$, if $\tau$ and the underlying $A_c$ satisfy $\tau  < \ln2 /\| A_c\| $, then $A_c$ can be uniquely reconstructed by 
\begin{equation}\label{eq:log_series}
A_c=\frac{1}{\tau}\log(A_d)= \frac{1}{\tau}\sum_{k=1}^{\infty}(-1)^{k+1} \frac{(A_d-I)^k}{k}. 
\end{equation}
\end{theorem}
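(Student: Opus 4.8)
The plan is to use the defining relation $A_d=e^{A_c\tau}$ and to show that, under the bound $\tau<\ln2/\|A_c\|$, the series in \eqref{eq:log_series} both converges in operator norm and reproduces exactly $A_c\tau$. I would split the argument into three parts: convergence of the series, identification of its limit, and uniqueness of the recovered generator.

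First I would establish convergence. The matrix version of the Mercator series $\log(M)=\sum_{k=1}^{\infty}(-1)^{k+1}(M-I)^k/k$ converges absolutely in the spectral norm whenever $\|M-I\|<1$, so it suffices to bound $\|A_d-I\|$. Writing $A_d-I=e^{A_c\tau}-I=\sum_{k=1}^{\infty}(A_c\tau)^k/k!$ and applying the triangle inequality together with submultiplicativity of $\|\cdot\|$ gives $\|A_d-I\|\le\sum_{k=1}^{\infty}(\tau\|A_c\|)^k/k!=e^{\tau\|A_c\|}-1$. Since the hypothesis is precisely $\tau\|A_c\|<\ln2$, this yields $\|A_d-I\|<e^{\ln2}-1=1$, so the series defining $\log(A_d)$ converges.

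Next I would show the limit equals $A_c\tau$. As every matrix appearing here is a power series in the single matrix $A_c$, they all commute, and the computation reduces to the scalar identity $\log\!\big(1+(e^{z}-1)\big)=z$, valid for $|z|$ small enough that $|e^{z}-1|<1$. At the matrix level, the convergent series defines the principal matrix logarithm, a primary matrix function that acts on each eigenvalue $\mu$ of $A_c\tau$ through the scalar principal logarithm $\mathrm{Log}(e^{\mu})$. Because every such eigenvalue obeys $|\mu|\le\|A_c\tau\|<\ln2<\pi$, its imaginary part lies in $(-\pi,\pi)$, which is exactly the strip on which $\mathrm{Log}(e^{\mu})=\mu$. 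Hence $\log(A_d)=A_c\tau$, and dividing by $\tau$ recovers $A_c$ as claimed.

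Finally, for uniqueness I would note that $\exp$ is not injective, so several generators $X$ may satisfy $e^{X\tau}=A_d$; they differ by adding integer multiples of $2\pi i/\tau$ to eigenvalues on shared invariant subspaces. The series returns the unique generator whose spectrum lies in the strip $|\mathrm{Im}(\cdot)|<\pi/\tau$, and the true $A_c$ lies in this strip by the norm bound, whereas any competing generator would carry an eigenvalue of imaginary part at least $2\pi/\tau$ in magnitude, contradicting $\|A_c\|<\ln2/\tau$. The main obstacle is this second part: rigorously pinning down the branch, i.e.\ certifying that the norm-convergent series equals the genuine $A_c\tau$ rather than some other logarithm of $A_d$. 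This is where confining all eigenvalues of $A_c\tau$ to the disk $|z|<\ln2<\pi$ is indispensable, together with the justification that power series in commuting matrices compose exactly as in the scalar case.
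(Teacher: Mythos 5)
Your proposal is correct and follows essentially the same route as the paper's proof: both bound $\|A_d-I\|\le e^{\tau\|A_c\|}-1<1$ to guarantee convergence of the Mercator series, and both reduce the identification $\log(A_d)=A_c\tau$ to the scalar identity $\log(e^{\lambda\tau})=\lambda\tau$ applied to the eigenvalues of $A_c$, which the bound $|\lambda\tau|<\ln 2$ confines to the principal branch. Your explicit branch-uniqueness discussion is a slightly more detailed articulation of the same point the paper makes via its citation of the holomorphic scalar logarithm and its remark on eigenvalues differing by integer multiples of $2\pi i$.
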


\begin{proof}
The conclusion of this theorem has been proved in our previous work \cite[Theorem 2]{LI20238381} using geometric methods. 
Here we provide a different algebraic based proof, which consists of two parts: i) bound the norm $\| A_d - I \|$ by one, and ii) demonstrate the uniqueness of $\frac{1}{\tau}\log(A_d)$. 

First, denote the $i$-th (complex) eigenvalue of $A_c$ by 
\begin{equation}
\lambda_i = s_i +\textbf{i}\cdot \omega_i ,
\end{equation}
where $\textbf{i}$ is the imaginary number unit, $ s_i $ and $\omega_i $ are the real and imaginary parts of $\lambda_i$, respectively. 
Then, the eigenvalue of $A_d$ is the exponential of $\lambda_i \tau$, given by 
\begin{equation}\label{eq:complex_def}
\tilde{\lambda}_i=e^{\lambda_i \tau}\!=\!e^{s_i \tau +\textbf{i}\cdot \omega_i \tau }=e^{s_i \tau}(\cos (\omega_i \tau)+\textbf{i}\sin(\omega_i \tau)) ,
\end{equation}
which will never locate at the zero point, and thus $A_d$ is always invertible regardless of the invertibility of $A_c$. 
Notice that when $\tau  < \ln2 /\| A_c\| $, $\|A_d - I\|$ satisfies
\begin{align}\label{eq:bound_one}
&\|A_d - I\|=  \|e^{A_c \tau} - I\| = \left \| \sum_{k=1}^{\infty} \frac{(A_c \tau)^k}{k!} \right \|  \nonumber \\
\le & \sum_{k=1}^{\infty} \frac{\|A_c \tau\|^k}{k!}= e^{\|A_c \tau\|}-1<e^{\ln2}-1=1.
\end{align}
Substituting \eqref{eq:bound_one} into $\|\log(A_d)\|\le \sum_{k=1}^{\infty} \frac{\|A_d-I\|^k}{k}$, it follows that $\|\log(A_d)\|$ will converge.

Next, we show that $\log(A_d)=A_c \tau$ holds. 
For simplicity, we suppose that $A_c$ is diagonalizable (when $A_c$ is not diagonalizable, it can be approximated by a sequence of diagonalizable matrices). 
Then, $A_c$ can be expressed as $A_c=C J C^{-1}=C \operatorname{diag}\{\lambda_1,\cdots,\lambda_{Nn}\}C^{-1}$, and it holds that 
\begin{align}
(A_d\!-\!I)^k \!=\! C \operatorname{diag}\{(\tilde{\lambda}_i \!-\! 1)^k,\cdots,(\tilde{\lambda}_{Nn} \!-\!1)^k\} C^{-1} .
\end{align}
Since $\|A_c \tau\| <\ln2$ indicates $|\lambda_i \tau|<\ln2$, it follows by holomorphicity that $ \log ( e^{\lambda_i \tau})= \lambda_i \tau$ exactly even if $\lambda_i$ is a complex number (see \cite[Lemma 2.6]{hall2015lie}). 
Finally, we have
\begin{align}
&\sum_{k=1}^{\infty}(-1)^{k+1} \frac{(A_d \!-\!I)^k}{k} = C \operatorname{diag}\{\log{\tilde{\lambda}_i},\cdots,\log{\tilde{\lambda}_{Nn} } \} C^{-1} \nonumber \\
=& C \operatorname{diag}\{\lambda_1 \tau,\cdots,\lambda_{Nn}\tau\}C^{-1}=A_c \tau,
\end{align}
where $\log{\tilde{\lambda}_i}=\sum_{k=1}^{\infty}(-1)^{k+1} \frac{(\tilde{\lambda}_i-1)^k}{k} $ is used in the first row. 
Therefore, $A_c$ can be reconstructed by $\frac{1}{\tau}\log(A_d)$. 
The proof is completed. 
\end{proof}

Theorem \ref{th:reconstruction} presents a sufficient condition of accurately reconstructing $A_c$ from $A_d$. 
The condition $\tau  < \ln2 /\| A_c\|$ is sufficient in the sense that it avoids the non-uniqueness of $\log(A_d)$, considering the eigenvalues of $A_d$ can differ by integer multiples of $2\pi i$. 
Note that this sampling period condition can be relaxed to $\|e^{A_c \tau} - I \|<1$, if the eigenvalues of $A_c$ are all real numbers (e.g., when $A_c$ is known to be symmetric). 
It is also worth mentioning that $\tau  < \ln2/\| A_c\|  $ is stronger than the critical sampling period in Shannon's sampling theorem, i.e., 
\begin{equation}
\frac{\ln2}{\| A_c\|} <  \frac{\pi}{\max\nolimits_{i} \left|\operatorname{Im}\{\lambda_{i}\left(A_{c}\right)\}\right|}. 
\end{equation}
The Shannon's sampling period ensures to extract the spectrum of $A_c$ instead of $A_c$ itself (e.g., see \cite{chen2000reconstruction,ding2009reconstruction}). 
Besides, in our considered situation, $A_d$ is obtained based on noisy observations and thus is with certain bias. 
If we directly try to reconstruct the $A_c$ by finding the spectrum of $A_d$, the reconstruction error is hard to be characterized. 
In the next part, we will present detailed estimator design for $A_c$ and analyze the reconstruction error. 

\subsection{Asymptotic Inference Performance and Improved Design Under Limited Observations}

Note that in the inference setting, only the estimate $\widehat{A}_d$ is available and thus ${A}_c$ is reconstructed by
\begin{equation}\label{eq:estimate_Ac}
\widehat{A}_c=\frac{1}{\tau}\log(\widehat{A}_d).
\end{equation}
The feasibility of \eqref{eq:estimate_Ac} is based on two conditions. 
First, $\widehat{A}_d$ needs to be invertible such that it can be represented by the exponential of a matrix \cite[Theorem 2.10]{hall2015lie}. 
As we have analyzed in Remark \ref{rema:invertibility}, both the matrices $Y_T^+$ and $Y_T^-$ have full rank. 
Consequently, it holds that the inverse of $\widehat{A}_d(T)$ exists as shown in the following expression
\begin{equation}\label{eq:estimator_inverse}
\widehat{A}_{d}^{-1} \!=\!\left\{
\begin{aligned}
& \left( \Sigma_0(T) \!-\!I_N \!\otimes\! \Gamma \right) \Sigma_1(T)^{-1},\text{if}~A_d \in \mathcal{S}_d,\\
& \Sigma_0(T) (\Sigma_1(T) )^{-1},\text{otherwise},
\end{aligned}\right.
\end{equation}
where $\Sigma_1(T)$ is invertible in each case. 
Second, the convergence of \eqref{eq:estimate_Ac} requires $\|\widehat{A}_d -I\| <1$. 
Based on Theorem \ref{th:matrix_error}, we point out when $A_d \in \mathcal{S}_d$, this norm condition holds in high probability with sufficient observations since the inference error will decay to zero. 
When $A_d \notin \mathcal{S}_d$, the inference error will also decay to zero if $\sigma_{\min}(Y^{-}_{T})$ grows faster than $\bm{O}(\sqrt{T})$.

Next, we present the asymptotic inference performance of $\widehat{A}_c(T)$. 
For simplicity of notation, we define 
\begin{equation*}
\begin{aligned}
&\tilde{A}_d=A_d-I,~E_d(T)=\widehat{A}_d(T) -{A}_d , \\
&\tilde{\delta}=\|\tilde{A}_d\|,~\delta_T=\|E_d(T)\|,
\end{aligned}
\end{equation*}
where $\tilde{\delta}<1$ holds due to \eqref{eq:constraint_psd}. 
Recall that if $A_d \in \mathcal{S}_d$, then the constructed $\widehat{A}_d(T) $ is asymptotically unbiased with probability one. 
Based on this property, it is further deduced that there always exists a $T_d>0$ such that the following inequality holds with high probability
\begin{equation}\label{eq:error_D}
\delta_T<1-\tilde{\delta},~ \forall T\ge T_d. 
\end{equation}
Since we focus on characterizing the inference performance of $\widehat{A}_c(T)$ as $T$ grows, we directly consider $T\ge T_d$ and demonstrate that how $\delta_T$ will influence $\|\widehat{A}_c(T) -A_c\|$. 

\begin{theorem}\label{th:error_Ac}
Consider $\tau  < \ln2 /\| A_c\| $ and  $\widehat{A}_d(T)$ is obtained by \eqref{eq:discrete_estimator}. 
If $\widehat{A}_d(T)$ is asymptotically unbiased for $A_d\in\mathcal{S}_d$, 
then $\forall T\ge T_d$, the inference error of $\widehat{A}_c(T)$ is upper bounded, given by
\begin{equation}
\|\widehat{A}_c(T) -A_c\|\le \frac{ \delta_T }{\tau (1-\tilde{\delta}-\delta_T)},~\forall T\ge T_d. 
\end{equation}
Specifically, when $T\to \infty$, with probability one it holds that 
\begin{equation}
\lim_{T\to\infty} \|\widehat{A}_c(T) -A_c\| = 0 .
\end{equation}
\end{theorem}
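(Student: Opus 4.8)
The plan is to exploit the matrix-logarithm representations of both $A_c$ and $\widehat{A}_c(T)$ and reduce the error to a bound on the difference of two convergent matrix-logarithm series. By Theorem \ref{th:reconstruction}, the sampling condition $\tau<\ln2/\|A_c\|$ guarantees $\tilde{\delta}=\|A_d-I\|<1$ together with $A_c=\tfrac{1}{\tau}\log(A_d)$, while \eqref{eq:estimate_Ac} sets $\widehat{A}_c(T)=\tfrac{1}{\tau}\log(\widehat{A}_d(T))$. Writing $\widehat{A}_d(T)-I=\tilde{A}_d+E_d(T)$ and restricting to $T\ge T_d$ so that \eqref{eq:error_D} yields $\|\widehat{A}_d(T)-I\|\le\tilde{\delta}+\delta_T<1$, both logarithm series converge (invertibility of $\widehat{A}_d(T)$ being already established around \eqref{eq:estimator_inverse}), and
\[
\widehat{A}_c(T)-A_c=\frac{1}{\tau}\sum_{k=1}^{\infty}\frac{(-1)^{k+1}}{k}\left((\tilde{A}_d+E_d)^k-\tilde{A}_d^k\right).
\]

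The key algebraic step is to control each summand $(\tilde{A}_d+E_d)^k-\tilde{A}_d^k$. I would use the telescoping identity $(P+Q)^k-P^k=\sum_{j=0}^{k-1}(P+Q)^{k-1-j}QP^{j}$ with $P=\tilde{A}_d$ and $Q=E_d$; submultiplicativity and the triangle inequality then give $\|(\tilde{A}_d+E_d)^k-\tilde{A}_d^k\|\le\delta_T\sum_{j=0}^{k-1}(\tilde{\delta}+\delta_T)^{k-1-j}\tilde{\delta}^{j}$. Bounding $\tilde{\delta}^{j}\le(\tilde{\delta}+\delta_T)^{j}$ collapses the inner sum to $k(\tilde{\delta}+\delta_T)^{k-1}$, so the factor $1/k$ in the logarithm series cancels the $k$ and leaves a plain geometric series. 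Summing it under $\tilde{\delta}+\delta_T<1$ produces
\[
\|\widehat{A}_c(T)-A_c\|\le\frac{\delta_T}{\tau}\sum_{k=1}^{\infty}(\tilde{\delta}+\delta_T)^{k-1}=\frac{\delta_T}{\tau(1-\tilde{\delta}-\delta_T)},
\]
which is exactly the claimed bound. The reliance on \eqref{eq:error_D} is essential here: it is what keeps the geometric ratio strictly below one and the series finite.

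For the asymptotic statement, I would invoke the asymptotic unbiasedness of Theorem \ref{th:matrix_error}: for $A_d\in\mathcal{S}_d$ we have $\delta_T=\|E_d(T)\|\to0$ with probability one, so for large $T$ the denominator $1-\tilde{\delta}-\delta_T\to1-\tilde{\delta}>0$ stays bounded away from zero while the numerator vanishes, forcing the upper bound to $0$ almost surely. I expect the main obstacle to be the per-term power-difference estimate, namely choosing the telescoping identity and then the simplification $\tilde{\delta}^{j}\le(\tilde{\delta}+\delta_T)^{j}$ that yields the clean $k(\tilde{\delta}+\delta_T)^{k-1}$ factor; a naive expansion into $2^k$ mixed words would fail to cancel the $1/k$ and would obscure the geometric structure. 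The remaining work is purely bookkeeping: confirming convergence of both series on $T\ge T_d$ and the well-definedness of $\log(\widehat{A}_d(T))$.
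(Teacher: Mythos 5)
Your proposal is correct and follows essentially the same route as the paper: both express $\widehat{A}_c(T)-A_c$ as the difference of the two logarithm series, bound each term $\|(\tilde{A}_d+E_d)^k-\tilde{A}_d^k\|$ by $\delta_T\,k(\tilde{\delta}+\delta_T)^{k-1}$, cancel the $1/k$, and sum the resulting geometric series under \eqref{eq:error_D}. The only (cosmetic) difference is that you reach the per-term bound via the telescoping identity, whereas the paper expands the non-commutative power difference into $2^k-1$ words and bounds it by the binomial sum $(\tilde{\delta}+\delta_T)^k-\tilde{\delta}^k$ before factoring --- contrary to your closing remark, that expansion does still cancel the $1/k$, so both variants work equally well.
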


\begin{proof}
The key of the proof is to bound the difference of the expansions of $\widehat{A}_c(T)$ and ${A}_c$. 


First, based on the matrix logarithm definition \eqref{eq:log_series}, the error matrix $E_c(T)=\widehat{A_c}(T)-A_c$ is written as 
\begin{align}\label{eq:Ac_difference}
E_c(T)&=\frac{1}{\tau}(\log(\widehat{A}_d(T))-\log(A_d)) \nonumber \\
&= \frac{1}{\tau}\sum_{k=1}^{\infty}(-1)^{k+1} \frac{ (\widehat{A}_d(T)-I)^k - (A_d-I)^k}{k} \nonumber \\
&= \frac{1}{\tau}\sum_{k=1}^{\infty}(-1)^{k+1} \frac{ (\tilde{A}_d+E_d(T) )^k - \tilde{A}_d^k}{k}. 
\end{align}
Since the commutativity between $\tilde{A}_{d}$ and $E_d(T)$ does not hold necessarily, the difference of $(\tilde{A}_{d} + E_d(T))^k-\tilde{A}_{d}^k $ can be generally expressed as 
\begin{equation}\label{eq:construction_expansion}
(\tilde{A}_{d} + E_d(T))^k-\tilde{A}_{d}^k= \operatorname{MP}_k(\tilde{A}_{d},E_d(T)) , 
\end{equation}
where $\operatorname{MP}_k(\tilde{A}_{d},E_d(T)) $ is matrix polynomial containing $(2^k-1)$ terms. 
Despite of the fussy expansion of $\operatorname{MP}_k(\tilde{A}_{d},E_d(T))$, its spectral norm can be upper bounded in an explicit form by utilizing $\| \tilde{A}_{d}^{\ell_1} E_d(T)^{\ell_2}\| \le \| \tilde{A}_{d}\|^{\ell_1} \|E_d(T)\|^{\ell_2} $ ($\forall \ell_1,\ell_2\ge0$), given by
\begin{align}\label{eq:Ad_norm}
& \|(\tilde{A}_{d} + E_d(T))^k-\tilde{A}_{d}^k\| = \| \operatorname{MP}(\tilde{A}_{d},E_d(T))  \| \nonumber \\ 
\le& \sum_{\ell=1}^{k} \! \binom{k}{\ell} \! \delta_T^{\ell} \tilde{\delta}^{k-\ell} \!=\! (\tilde{\delta}\!+\!\delta_T)^k\!-\!\tilde{\delta}^k = \delta_T \! \sum_{\ell=0}^{k-1} (\tilde{\delta} \!+\!\delta_T)^{k-\ell-1} \tilde{\delta}^{\ell} \nonumber \\
 \le &   \delta_T \! \sum_{\ell=0}^{k-1} (\tilde{\delta} \!+\!\delta_T)^{k-1} = \delta_T \cdot k (\tilde{\delta}+\delta_T)^{k-1},
\end{align}
where the first equality in the second row utilizes the binomial theorem and the second equality is to factor $(\tilde{\delta}+\delta_T)^k-\tilde{\delta}^k$. 

Next, substituting \eqref{eq:Ad_norm} into the norm of \eqref{eq:Ac_difference}, it yields that 
\begin{align}
\|E_c(T)\| &\le\! \sum_{k=1}^{\infty} \frac{ \| (\tilde{A}_d\!+\!E_d(T))^k \!- \!\tilde{A}_d^k  \| }{\tau k} \!\le\! \sum_{k=1}^{\infty} \frac{ (\tilde{\delta}+\delta_T)^k \!-\! \tilde{\delta}^k }{\tau k} \nonumber \\
& \le \frac{\delta_T}{\tau} \sum_{k=1}^{\infty} (\tilde{\delta}+\delta_T)^{k-1} = \frac{ \delta_T }{\tau (1-\tilde{\delta}-\delta_T)}. 
\end{align}
Finally, notice that when $A_d\in\mathcal{S}_d$, $\delta_T$ will converge to zero as $T\to\infty$ according to Theorem \ref{th:matrix_error}. 
Hence, it follows that 
\begin{equation}
\lim_{T\to\infty} \|E_c(T)\| \le \lim_{T\to\infty} \frac{ \delta_T }{\tau (1-\tilde{\delta}-\delta_T)}=0,
\end{equation}
which completes the proof. 
\end{proof}

Theorem \ref{th:error_Ac} illustrates that an unbiased $\widehat{A}_d(T)$ also ensures an unbiased $\widehat{A}_c(T)$ in the asymptotic sense. 
This conclusion also applies to the estimator $\widehat{A}_d^s(T)$ for the linear growth case, and the proof resembles that of Theorem \ref{th:error_Ac} and is omitted here. 
As for cases when $\{y_t\} $ grows in an ultra-linear pattern, the inference error of $\widehat{A}_c(T)$ is upper bounded by a constant. 
Specifically, the error will also decay to zero asymptotically if $\sigma_{\min}(Y^{-}_{T})$ grows faster than $\bm{O}(\sqrt{T})$.

Note that in some practical scenarios, the observations are not sufficiently large, and the constraint $\|\widehat{A}_d -I\| < 1$ may not hold. 
To avoid the occurrence of $\|\widehat{A}_d -I\| \ge 1$ under limited observations, we can formulate a semidefinite program optimization problem with constraints to obtain a feasible $\widehat{A}_d$. 
Notice that by using the Schur complement, the constraint $\|{A}_d - I \|<1$ is equivalent to the following linear matrix inequality 
\begin{equation}\label{eq:constraint_psd}
\begin{bmatrix} 
I & A_d-I \\
(A_d-I)^\intercal &  I
\end{bmatrix} \succ 0.
\end{equation}

Besides, it is worth noting that there also exists an implicit constraint for $A_d$. 
In the continuous model \eqref{eq:global_a}, the closed-loop system matrix $A_c$ satisfies
\begin{align}\label{eq:property_Ac}
A_c (\bm{1}_N \otimes I_n) & = (I_{N} \otimes A- L\otimes {B} K ) (\bm{1}_N \otimes I_n) = \bm{1}_N \otimes A \nonumber \\
& \Rightarrow A_c^k (\bm{1}_N \otimes I_n) = \bm{1}_N \otimes A^k,
\end{align}
where the fact $L \bm{1}_N = \bm{0}_N$ is used. 
Then, multiplying $A_d=e^{ {A_c} {\tau}}$ with $(\bm{1}_N \otimes I_n) $, it follows that 
\begin{align} \label{eq:property_Ad}
A_d (\bm{1}_N \otimes I_n) & =e^{ {A_c} {\tau}} (\bm{1}_N \otimes I_n)= \sum_{k=0}^{\infty} \frac{({A_c} {\tau}  )^k}{k!} (\bm{1}_N \otimes I_n)  \nonumber \\
&= \sum_{k=0}^{\infty} \frac{\bm{1}_N \otimes (A {\tau}  )^k}{k!} = \bm{1}_N \otimes e^{ {A} {\tau}}, 
\end{align}
where the property \eqref{eq:property_Ac} is applied in the second row. 
Although \eqref{eq:property_Ad} implies an equality constraint for $A_d$, the unknown matrix $A$ prohibits us from directly using it. 
However, if we treat $A_d (\bm{1}_N \otimes I_n)$ as a block matrix composed of $N$ identical submatrices in $\mathbb{R}^{n\times n}$, then we can construct the following equality from \eqref{eq:property_Ad} without using $A$, 
\begin{equation}\label{eq:constraint_eq}
A_d (\bm{1}_N \otimes I_n) - \bm{1}_N \otimes [A_d (\bm{1}_N \otimes I_n)]_{11}=\bm{0},
\end{equation}
where $[A_d (\bm{1}_N \otimes I_n)]_{11}$ represents the first $n\times n$ block in $A_d$ and its value should equal to $e^{ {A} {\tau}}$. 
Based on the conditions \eqref{eq:constraint_psd} and \eqref{eq:constraint_eq}, a feasible $\widehat{A}_d$ is obtained by solving the following problem for the case $\hat{\mathbb{I}}(A_d \in \mathcal{S}_d)=1$
\begin{subequations} \label{Prob:SDP}
\begin{align}
\mathop {\min }\limits_{A_d} ~& \| \Sigma_1(T) - A_d (\Sigma_0(T)-I_N \!\otimes\! \Gamma) \|_{F}^2  \\
\text{s.t.}~~& \eqref{eq:constraint_psd}~\text{and}~\eqref{eq:constraint_eq},
\end{align}
\end{subequations}
where the cost function is directly motivated from the estimator $\widehat{A}_d(T)=\Sigma_1(T) \left( \Sigma_0(T) \!-\!I_N \!\otimes\! \Gamma \right)^{-1}$. 
As for the case $\hat{\mathbb{I}}(A_d \in \mathcal{S}_d)=1$, we only need to change the objective function to $\| \Sigma_1(T) - A_d \Sigma_0(T)\|_{F}^2 $.

\subsection{Infer the Laplacian Topology Matrix}

In this part, we demonstrate how to extract an estimate for $L$ from the inferred continuous-time closed-loop matrix $\widehat{A}_c$.

First, recalling \eqref{eq:property_Ac} reveals that $A_c (\bm{1}_N \otimes I_n) = \bm{1}_N \otimes A$, which indicates that all $N$ blocks of the sum $A_c (\bm{1}_N \otimes I_n)$ are exactly the matrix $A$. 
This identity can be seen as an intrinsic disclosure vulnerability of NDSs satisfying \eqref{eq:global_a}. 
Leveraging this property, an external observer easily extracts the estimate of $A$ from $\widehat{A}_c$, given by 
\begin{equation}\label{eq:estimate_A2}
\widehat{A}=\frac{1}{N}\sum\limits_{i\in\mathcal{V}}\sum\limits_{j\in\mathcal{V}} \widehat{A}_c[\bm{i},\bm{j}]. 
\end{equation}
Since $\widehat{A}$ is a sum of finite blocks in $\widehat{A}_c$, the convergence rate of $\| \widehat{A}(T)-A\|$ is the same as $\|\widehat{A}_c(T)- A_c\|$ and it holds that $\lim_{T\to\infty} \|\widehat{A}(T) -A\| = 0$. 
Then, define an auxiliary matrix as 
\begin{equation}\label{eq:auxiliary_W}
W=I\otimes \widehat{A} - \widehat{A}_c.
\end{equation}
With the estimated $\widehat{A}_c$ and $\widehat{A}$, it is desired that the deviation $\| W - L\otimes {B} K\|$ should be as close to zero as possible. 
Since $BK$ is tied up with $L$ through a Kronecker product, it implies that $W$ contains multiple block-wise linear equations about $BK$ and $L$. 
Specifically, based on the definition of $L$, we have
\begin{equation}
\sum_{i,j=1, j\neq i } L_{ij} {B} K /N = -s_L BK,
\end{equation}
where $s_L= \frac{1}{N}\sum_{i=1} L_{ii}$ represents the average row-sum for $A$. 
Utilizing this property, a scalar ambiguity estimate of $BK$ is designed as 
\begin{align}\label{eq:solution_Q}
Z= \frac{1}{2N}{ \sum\limits_{i = 1}^{N} \left( W_{ii} - \sum\limits_{j = 1, j\neq i}^{N} W_{ij} \right)  },
\end{align}
which leverages all information contained in the matrix $W$. 
It is worth mentioning that $Z$ will approximate $s_L BK$ instead of $BK$ as the observation number goes to infinity, and thus we call it a scalar ambiguity estimate.

Next, we estimate the elements of $L$ consequently by taking the elements of $Z$ as benchmarks. 
Note that either $B_{\ell_1}=0$ or $K_{\ell_2}=0$ will result in $[BK]_{\ell_1 \ell_2}=0$, and thus the near-zero elements in $Z$ can be excluded for estimation. 
Define the index set of the non-zero elements of $Z$ as 
\begin{equation}\label{eq:Iz}
\mathcal{I}_Z=\{(\ell_1,\ell_2):~|Z_{\ell_1 \ell_2 }|>\varepsilon_Z, \ell_1,\ell_2=1,\cdots,n \},
\end{equation}
where $\varepsilon_Z>0$ is a small threshold to exclude the near-zero elements in $Z$. 
Then, a temporary surrogate for the element $L_{ij}$ can be computed by 
\begin{equation}\label{eq:estimate_L_other}
\tilde{L}_{i j} \!=\!  \frac{ 1 }{ |\mathcal{I}_Z| } \! \sum\limits_{ (\ell_1,\ell_2)\in\mathcal{I}_Z} \!\! \frac{ W[(i-1) n+{\ell_1},(j-1) n+{\ell_1} ]}{ Z_{\ell_1 \ell_2} }, 
\end{equation}
where $|\mathcal{I}_Z|$ represents the cardinality of $\mathcal{I}_Z$. 
Notice that the surrogate $\tilde{L}=[\tilde{L}_{i j}]_{i,j=1}^{N}$ is only $L$'s rough extraction from $\widehat{A}_c$, and does not necessarily satisfy the properties of a Laplacian matrix. 
Hence, we need to further infer $L$ by considering its properties and conditions that need to be met. 
This step is formulated as solving the following optimization problem
\begin{subequations} \label{Prob:infer_L}
\begin{align}
\mathop {\min }\limits_{L} ~& \| L - \tilde{L} \|_{F}^2  \\
\text{s.t.}~~& L\bm{1}=\bm{0},\\
&L_{ij}\le0,i\neq j, \\
&L~\text{has a diagonal Jordan form} . \label{Prob:infer_L4}
\end{align}
\end{subequations}
The constraint \eqref{Prob:infer_L4} is motivated by considering the there exists an optimal global LQ control objective function for the model \eqref{eq:node_dynamics}, which will be further inferred in Section \ref{sec:infer_objective}. 
If we only aim to infer the nodal dynamics and the topology matrix, this constraint can be dropped.

\begin{algorithm}[t]
\caption{Alternating minimization based topology inference algorithm}
\begin{algorithmic}[1]\label{algo:solve_L}
\REQUIRE Surrogate matrix $\tilde{L} \in \mathbb{R}^{ N\times N}$, stepsize factor $\alpha$, inner iteration limit $\bar{k}$, outer iteration limit $\bar{t}$, perturbation magnitude $\epsilon$, threshold parameter $\epsilon_L$
\ENSURE Estimate of matrix $L$;
\FOR {$t = 1, \ldots, \bar{t}$} 
{
    \STATE Initialize $\tilde{L}  = \tilde{L}^{(t-1)}$;
    \FOR{$k = 1, \ldots, \bar{k}$} 
    {
        \STATE Compute gradient: $\nabla f(L) = 2(L - \tilde{L} )$;
        \STATE Gradient descent update: $L \gets L - \alpha \nabla f(L)$;
        \STATE Project $L$ to satisfy row-sum constraint: $L \gets L - \frac{\mathbf{1} L \mathbf{1}^T}{N}$;
        \STATE Project $L$ to satisfy non-negativity constraint on off-diagonal: 
        \FORALL{$i \neq j$}
        {
            \STATE $L_{ij} \gets \max(L_{ij}, 0)$;
        }
        \ENDFOR
    }
    \ENDFOR
    \IF{$\|L-L^{(t-1)}\|\le \epsilon_L$ and $L$ is simple}
    {
        \STATE \textbf{break};
    }
    \ENDIF
    \STATE Add random perturbation: $L \gets L + \epsilon \cdot \text{randn}(N, N)$;
    \STATE Update $L^{(t)} \gets L$;
}
\ENDFOR
\STATE \textbf{Return} $L^{\bar{t}}$
\end{algorithmic}
\end{algorithm}

The problem \eqref{Prob:infer_L} is non-convex due to the diagonal Jordan form constraint, and conventional convex optimization methods may not work well. 
To deal with this issue, we adopt an alternating minimization based topology inference algorithm to approximate $L$, shown as Algorithm \ref{algo:solve_L}. 
This algorithm involves two major stages. 
In the first stage [Lines 2-11], without considering the diagonal Jordan form constraint, a projected gradient descent step is used to minimize the Frobenius norm $\|L-\tilde{L}\|_F$ subject to the row-sum and non-negativity constraints. 
A large inner iteration limit $\bar{k}\in\mathbb{N}^+$ is used to terminate the the iteration process (e.g., set $\bar{k}=500$). 
In the second stage [Lines 12-16], after the latest $\hat{L}$ is obtained, it is perturbed slightly by random noises to promote distinct eigenvalues, where the perturbation magnitude is characterized by a small $\epsilon$ (e.g., set as $0.05$). 
A small outer iteration limit $\bar{t}$ (e.g., set as $10$) is used to avoid infinite loops. 
Note that although the inferred $\widehat{L}=[\widehat{L}_{i j}]_{i,j=1}^N$ is also a scalar ambiguity estimate for $L$, its binary attribute (whether $\widehat{L}_{i j}$ is zero) is sufficient to reveal the interconnection relations between two nodes. 
By implementing Algorithm \ref{algo:solve_L}, we can obtain a feasible solution to problem \eqref{Prob:infer_L} effectively.


\subsection{Infer the Feedback Gain Matrix}
In this part, we propose a decomposition method to extract the state feedback matrix $K$ from the obtained $\widehat{A}_c$. 
For simplicity, we first analyze the single input case (i.e., $B\in \mathbb{R}^{n\times 1}$), and then demonstrate how to extend to multi-input case. 


First, note that any arbitrary two rows in $BK$ are linearly dependent when $B\in \mathbb{R}^{n\times 1}$.  
Supposing the $(\ell_1,\ell_2)$-th element of $BK$ is non-zero (i.e., $[BK]_{\ell_1 \ell_2}=B_{\ell_1} K_{\ell_2} \neq 0$, it holds that 
\begin{equation}\label{eq:B_ratio}
\frac{ [BK]_{\ell'_1 \ell_2} }{[BK]_{\ell_1 \ell_2} } = \frac{ B_{\ell'_1} }{ B_{\ell_1} },~ \ell'_1=1,\cdots,n. 
\end{equation}
Hence, we can utilize $Z$ to obtain a scalar ambiguity estimate of $B$. 
To begin with, the element with maximum magnitude in $Z$ is selected as the denominator in \eqref{eq:B_ratio}, denoted as
\begin{equation}
Z_{\ell_d\ell'_d}\!=\!\arg\max_{Z_{\ell_1 \ell_2}}\{|Z_{\ell_1 \ell_2}|: (\ell_1,\ell_2)\in\mathcal{I}_Z \},
\end{equation}
and the index set that excludes near-zero elements in the $\ell_d$-th row of $Z$ is defined by 
\begin{equation}\label{eq:index_j}
\mathcal{I}_Z^d= \{ \ell_2 : |Z_{\ell_d \ell_2}|>\varepsilon_Z, \ell_2=1,\cdots,n\},
\end{equation}
where $\varepsilon_Z$ is the same one in \eqref{eq:Iz}. 
Note that there exists at least one element in $\mathcal{I}_Z^d$ to enable the feedback control, and thus $\mathcal{I}_Z^d\neq\emptyset$. 
Similar to calculating $\tilde{L}$, we determine the values of elements in $B$ by 
\begin{equation}\label{eq:estimate_B}
\widehat{B}_{\ell_1}=\frac{ 1 }{|\mathcal{I}_Z^d|} \sum\limits_{\ell_2 \in\mathcal{I}_Z^d} \frac{ Z_{\ell_1 \ell_2} }{ Z_{ \ell_d \ell_2} },~ \ell_1=1,\cdots,n. 
\end{equation}
Next, considering that $B$ has full rank, the feedback gain matrix can be uniquely estimated by 
\begin{equation}\label{estimate:K}
\widehat{K}= ({\widehat{B}^\intercal} \widehat{B})^{-1} {\widehat{B}^\intercal} Z. 
\end{equation}

Based on the above procedures, we observe that the key of decoupling $K$ from the system topology builds on using the zero sum property of $L$ in the construction of $\widehat{Z}$. 
In addition, the accuracy of the previously estimated $\widehat{A}_c$ is fundamental for $\widehat{K}$. 
The following result demonstrates the asymptotic inference performance of $\widehat{K}$.


\begin{theorem}\label{th:single_input}
Supposing that $\widehat{A}_c$ is an asymptotically unbiased estimator, and $\widehat{L}$ and $\widehat{B}$ are determined by \eqref{eq:estimate_L_other} and \eqref{eq:estimate_B}, respectively, then $\widehat{K}$ approximates $K$ asymptotically up to a ambiguity scalar, i.e., 
\begin{equation}\label{estimate:K_new}
\mathop{\lim}\limits_{T \to \infty} \widehat{K}=\beta K,
\end{equation}
where $\beta \!=\!  s_L B_{\ell_d} $ is the ambiguity scalar. 
\end{theorem}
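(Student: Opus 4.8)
The plan is to propagate the asymptotic unbiasedness of $\widehat{A}_c$ through each intermediate quantity in the construction, exploiting the zero row-sum of $L$ and the rank-one structure of $BK$ in the single-input case. First I would establish the limiting form of the auxiliary matrix $W$ in \eqref{eq:auxiliary_W}. Since $\widehat{A}\to A$ and $\widehat{A}_c\to A_c$ with probability one, continuity gives $W\to I\otimes A-A_c=L\otimes BK$ by the identity \eqref{eq:property_Ac}. Reading off the $n\times n$ blocks, the $(i,j)$-th block of $W$ satisfies $W_{ij}\to L_{ij}BK$.

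Next I would compute the limit of $Z$ in \eqref{eq:solution_Q}. Substituting the block limits and invoking $L\bm{1}=\bm{0}$, i.e.\ $\sum_{j\neq i}L_{ij}=-L_{ii}$, each summand obeys $W_{ii}-\sum_{j\neq i}W_{ij}\to 2L_{ii}BK$, hence
\begin{equation*}
\lim_{T\to\infty} Z = \frac{1}{2N}\sum_{i=1}^N 2L_{ii}\,BK = \frac{1}{N}\sum_{i=1}^N L_{ii}\,BK = s_L\,BK,
\end{equation*}
so entrywise $Z_{\ell_1\ell_2}\to s_L B_{\ell_1}K_{\ell_2}$.

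Then I would analyze $\widehat{B}$ from \eqref{eq:estimate_B}. For large $T$ the pivot $Z_{\ell_d\ell'_d}$ converges to a nonzero entry $s_L B_{\ell_d}K_{\ell'_d}$, so every ratio $Z_{\ell_1\ell_2}/Z_{\ell_d\ell_2}\to B_{\ell_1}/B_{\ell_d}$ is independent of the summation index $\ell_2$, giving $\widehat{B}\to B/B_{\ell_d}$. Finally, substituting the limits of $\widehat{B}$ and $Z$ into $\widehat{K}=(\widehat{B}^\intercal\widehat{B})^{-1}\widehat{B}^\intercal Z$ and using that $B^\intercal B=\|B\|^2$ is a nonzero scalar since $B$ has full rank,
\begin{equation*}
\lim_{T\to\infty}\widehat{K} = \frac{B_{\ell_d}^2}{\|B\|^2}\cdot\frac{s_L}{B_{\ell_d}}\|B\|^2 K = s_L B_{\ell_d}\,K = \beta K,
\end{equation*}
which is the claim.

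The main obstacle will be justifying that the limits pass through the data-dependent selection steps: the index sets $\mathcal{I}_Z,\mathcal{I}_Z^d$ and the pivot index $(\ell_d,\ell'_d)$ are all defined through the estimate $Z=Z(T)$, so I must argue that for $T$ large enough they coincide with the corresponding objects defined by the true limit $s_L BK$, and that the denominators entering \eqref{eq:estimate_B} and \eqref{estimate:K} stay bounded away from zero. Given the full-rank assumption on $B$ and a fixed threshold $\varepsilon_Z$ smaller than the smallest nonzero $|s_L B_{\ell_1}K_{\ell_2}|$, the almost-sure convergence of $Z$ makes this selection eventually consistent, after which continuity of division and of the scalar pseudo-inverse $(\widehat{B}^\intercal\widehat{B})^{-1}$ closes the argument. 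Extending to the multi-input case would follow the same template, with the rank-one ratio identity \eqref{eq:B_ratio} replaced by the corresponding column-space argument for $\widehat{B}$.
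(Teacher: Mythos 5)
Your proposal is correct and follows essentially the same route as the paper: propagate the asymptotic unbiasedness of $\widehat{A}_c$ to $W$ and $Z$, use $L\bm{1}=\bm{0}$ to identify $\lim Z=s_L BK$, show $\widehat{B}\to B/B_{\ell_d}$ via the ratio identity, and substitute into the least-squares formula for $\widehat{K}$ to obtain the scalar $\beta=s_L B_{\ell_d}$. In fact you are slightly more careful than the paper in two places it leaves implicit — the explicit block computation $W_{ii}-\sum_{j\neq i}W_{ij}\to 2L_{ii}BK$ and the eventual consistency of the data-dependent index sets $\mathcal{I}_Z$, $\mathcal{I}_Z^d$ and the pivot $(\ell_d,\ell'_d)$ — but these are refinements of the same argument rather than a different one.
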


\begin{proof}
First, for the surrogate $\tilde{L}_{ij}$, it can be equivalent represented as 
\begin{equation}
\tilde{L}_{ij} \!=\!  \frac{ 1 }{ |\mathcal{I}_Z| } \! \sum\limits_{ (\ell_1,\ell_2)\in\mathcal{I}_Z} \!\!\!\! \frac{ W^*[(i-1) n+{\ell_1},(j-1) n+{\ell_1} ] + e^a_{\ell_1 \ell_2}  }{ Z^*_{\ell_1 \ell_2} + e^b_{\ell_1 \ell_2 }},
\end{equation}
where $W^*$ and $Z^*$ are the groundtruth of $L\otimes {B} K$ and $s_L BK$, 
$e^a_{\ell_1 \ell_2} $ is the estimation error of the $(\ell_1,\ell_2)$-th element in the $(i,j)$-th block of $W^*$, 
and $e^b_{\ell_1 \ell_2} $ is the estimation error of the $(\ell_1,\ell_2)$-th element of $Z^*$.  
Note that when $\widehat{A}_c$ is an asymptotically unbiased estimator, the estimates $W=I\otimes \widehat{A} - \widehat{A}_c$ and $Z$ computed by \eqref{eq:solution_Q} are also asymptotically unbiased. 
Hence, we have 
\begin{equation}
\mathop{\lim}\limits_{T \to \infty} e^a_{\ell_1 \ell_2} =\mathop{\lim}\limits_{T \to \infty} e^b_{\ell_1 \ell_2} =0,
\end{equation}
which indicates that $\tilde{L}_{ij}$ is asymptotically unbiased, and so is $\widehat{B}$ by similar derivation steps. 
To ease expressions, we directly treat $Z$, $\widehat{L}$ and $\widehat{B}$ as accurate hereafter, i.e.,  
\begin{equation}
Z=s_L BK,~\widehat{L}=\tilde{L}=\frac{L}{s_L},~\widehat{B}=\frac{B}{B_{\ell_d}}.
\end{equation}
Then, by the construction of $\widehat{K}$, it follows that 
\begin{align}
\widehat{K}&=\widehat{B}^{\dagger} \widehat{B} Z =(\frac{B^\intercal B}{B_{\ell_d}^2})^{-1} \cdot \frac{ B^\intercal }{B_{\ell_d}} \cdot (s_L BK) \nonumber \\
&= (B_{\ell_d} s_L)  ({B^\intercal B})^{-1} ({ B^\intercal }  B )K = (B_{\ell_d} s_L) K, 
\end{align}
which completes the proof. 
\end{proof}


Theorem \ref{th:single_input} reveals that the feedback matrix $K$ in the single-input case can be inferred up to an ambiguity scalar, which involves two factors brought by estimating $L$ and $B$ independently. 
Note that calculating the ratio of elements in $K$ and $B$ is the key to this conclusion, which is directly feasible in the single input case. 

For situations where $B\in \mathbb{R}^{n\times m} $ ($m>1$), the matrices $B$ and $K$ cannot be estimated uniquely due to bilinearity of the equation $BK=Z$. 
To deal with this issue, we leverage the full rank constraint of $B$ and $K$ and proceed to approximate them by the SVD factorization. 
By decomposing the matrix $Z$ and carefully selecting components that ensure the desired rank properties, we can construct solutions for $B$ and $K$ that meet both the equation and the rank constraints.
First, applying the SVD on the target matrix $Z$, we have
\begin{equation}
Z = \tilde{U} \tilde{\Lambda} \tilde{V}^\intercal,
\end{equation}
where $\tilde{U} \in \mathbb{R}^{n \times n}$ and $\tilde{V} \in \mathbb{R}^{n \times n}$ are orthogonal matrices, and $\tilde{\Lambda}\in \mathbb{R}^{n \times n}$ is a diagonal matrix composed of the singular values of $Z$. 
Since the goal is for $B$ and $K$ to both have rank $m$, we can focus on the top $m$ singular values and their corresponding vectors in $\tilde{U}$ and $\tilde{V}$. 
Let $\tilde{U}_m$ denote the first $n \times m$ columns of $\tilde{U}$, $\tilde{V}_m$ denote the first $n \times m$ columns of $\tilde{V}$, and $\tilde{\Lambda}_m \in \mathbb{R}^{m \times m}$ be the diagonal matrix composed of the top $m$ singular values of $\tilde{\Lambda}$. Then, we approximate $Z$ by
\begin{equation}
Z \approx \tilde{U}_m \tilde{\Lambda}_m \tilde{V}_m^\intercal.
\end{equation}
Using this truncated SVD, we construct $B$ and $K$ as
\begin{equation}\label{eq:BK}
\widehat{B} = \tilde{U}_m \tilde{\Lambda}_m^{1/2} \quad \text{and} \quad \widehat{K} = \tilde{\Lambda}_m^{1/2} \tilde{V}_m^T,
\end{equation}
where $\tilde{\Lambda}_m^{1/2}$ is the square root of the diagonal matrix $\tilde{\Lambda}_m$. 
This construction ensures that $B K = \tilde{U}_m \tilde{\Lambda}_m \tilde{V}_m^T$, which approximates $Z$ closely while satisfying the desired rank constraints. 
Additionally, the product $\widehat{K} \widehat{B} = \tilde{\Lambda}_m$ is an $m \times m$ matrix with full rank $m$, thus meeting the requirement that $K B$ is invertible. 
Consequently, this approach provides a solution that approximates $Z$ in a least-squares sense while maintaining the required rank properties.



\section{Inferring the LQ Cost Function}\label{sec:infer_objective}

Considering the feedback control of the NDS is governed by a standard LQ cost function, 
this section further demonstrates how to reconstruct the cost parameters therein.



\subsection{Conditions of Optimal LQ Control for NDSs}

First, recall that the nodal dynamics model is described by \eqref{eq:node_dynamics}. 
By treating all nodes as a whole system with its own dynamics, the global model can be written
\begin{equation}
\left\{
\begin{aligned}\label{eq:global_model}
\dot{x}&= (I_{N} \otimes A) x +(I_{N} \otimes B) u \triangleq \tilde{A} x + \tilde{B} u \\
u&=-(L\otimes K) x \triangleq -\tilde{K} x
\end{aligned}
\right.,
\end{equation}
where $\tilde{A}$ and $\tilde{B}$ are the global system matrices, and $\tilde{K}$ is the global state feedback gain matrix. 
From the LQ control perspective, this state feedback control can be formulated as minimizing the following cost
\begin{equation}\label{eq:objective}
J=\int_0^{\infty}\left(x^{\intercal} Q x+u^{\intercal} R u\right) d t,
\end{equation}
where $Q\in \mathbb{R}^{Nn \times Nn}$ is a positive semi-definite (PSD) matrix, and $R\in \mathbb{R}^{Nm \times Nm}$ is a positive definite (PD) matrix. 
Correspondingly, the optimal control law of \eqref{eq:objective} is given by
\begin{equation}\label{eq:k-LQR}
u=-\tilde{K}x=-R^{-1} \tilde{B}^{\intercal} Px ,
\end{equation}
where $P \in\mathbb{R}^{Nn\times Nn}$ is the unique solution of the following algebraic Riccatti equation (ARE) 
\begin{equation}\label{eq:global_ARE}
\tilde{A}^{\intercal} P+ P \tilde{A}- P \tilde{B} R^{-1} \tilde{B}^{\intercal} P+Q=\bm{0}. 
\end{equation}
Note that the formulation \eqref{eq:k-LQR}-\eqref{eq:global_ARE} is a forward result of LQ control design given the cost function.
Many existing works have investigated the optimal cooperative control for NDSs, e.g., see \cite{movric2014cooperative,zhang2015distributed}. 
Since this point is not the focus of our work, here we directly consider that $\tilde{K}$ is optimal for \eqref{eq:objective}.

To inversely determine the cost matrices $Q$ and $R$ from the system dynamics matrices and the feedback gain, 
the problem will become extremely difficult because the parameters of the LQ cost highly depend on the coupled nodal dynamics and the topology. 
More specifically, it follows from \eqref{eq:global_model} and \eqref{eq:k-LQR} that the global feedback gain should simultaneously satisfy $(L\otimes K) =\tilde{K}=R^{-1} \tilde{B}^{\intercal} P$, which indicates that the cost matrices have certain structure. 
We next show some properties that the cost matrices should satisfy due to the coupling nature. 


\begin{lemma}\label{lemma:QP}
Considering that \eqref{eq:k-LQR} is the optimal control for the LQ problem \eqref{eq:objective}, it is necessary that the matrices $Q$ and $P$ should satisfy
\begin{equation}\label{eq:necessary}
\sum_{j=1}^{N} Q_{ij}=\bm{0}, ~\sum_{j=1}^{N} P_{ij}=\bm{0},~ \forall i\in\mathcal{V}.
\end{equation} 
\end{lemma}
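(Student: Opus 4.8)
The plan is to work entirely in Kronecker form and to note that the two block-sum conditions in \eqref{eq:necessary} are compact statements about the matrix $\bm{1}_N\otimes I_n$: writing $\mathbf{J}:=\bm{1}_N\otimes I_n\in\mathbb{R}^{Nn\times n}$, one has $\sum_{j=1}^N P_{ij}=\bm{0}$ for all $i$ iff $P\mathbf{J}=\bm{0}$, and likewise $\sum_{j=1}^N Q_{ij}=\bm{0}$ iff $Q\mathbf{J}=\bm{0}$. So the target is $P\mathbf{J}=\bm{0}$ and $Q\mathbf{J}=\bm{0}$. The structural facts I would assemble first all stem from $L\bm{1}_N=\bm{0}_N$: the feedback gain annihilates the consensus subspace, $\tilde{K}\mathbf{J}=(L\otimes K)(\bm{1}_N\otimes I_n)=(L\bm{1}_N)\otimes K=\bm{0}$, and the drift acts invariantly on it, $\tilde{A}\mathbf{J}=\mathbf{J}A$, so that the closed-loop matrix satisfies $A_c\mathbf{J}=\mathbf{J}A$ as already recorded in \eqref{eq:property_Ac}.

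Second, I would extract one half of the consequence from the optimality condition \eqref{eq:k-LQR}. Since $\tilde{K}=R^{-1}\tilde{B}^\intercal P$ and $R$ is invertible, $\tilde{K}\mathbf{J}=\bm{0}$ immediately gives $\tilde{B}^\intercal P\mathbf{J}=\bm{0}$, i.e. $(I_N\otimes B^\intercal)(P\mathbf{J})=\bm{0}$. This alone does not yield $P\mathbf{J}=\bm{0}$ when $m<n$, because $B^\intercal$ has a nontrivial kernel; it only says each block $\sum_j P_{ij}$ lies in $\ker B^\intercal$. Next I would right-multiply the ARE \eqref{eq:global_ARE} by $\mathbf{J}$. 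The quadratic term drops out because it carries the factor $\tilde{B}^\intercal P\mathbf{J}=\bm{0}$, and using $\tilde{A}\mathbf{J}=\mathbf{J}A$ the remaining terms collapse to the Sylvester-type identity $\tilde{A}^\intercal(P\mathbf{J})+(P\mathbf{J})A+Q\mathbf{J}=\bm{0}$. This ties $P\mathbf{J}$ and $Q\mathbf{J}$ together but, being homogeneous after the substitution, is not by itself enough to force either to vanish.

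To close the argument I would bring in optimality through the value function along the consensus subspace, together with positive (semi)definiteness. For an arbitrary consensus initial state $x(0)=\mathbf{J}c$ with $c\in\mathbb{R}^n$, the facts $\tilde{K}\mathbf{J}=\bm{0}$ and $A_c\mathbf{J}=\mathbf{J}A$ force the optimal closed-loop trajectory to stay on the consensus subspace with zero input: $x(t)=\mathbf{J}e^{At}c$ and $u(t)\equiv\bm{0}$. Hence the optimal cost equals both $x(0)^\intercal P x(0)=c^\intercal\bar{P}c$ and $\int_0^\infty c^\intercal e^{A^\intercal t}\bar{Q}e^{At}c\,dt$, where $\bar{P}:=\mathbf{J}^\intercal P\mathbf{J}=\sum_{i,j}P_{ij}$ and $\bar{Q}:=\mathbf{J}^\intercal Q\mathbf{J}=\sum_{i,j}Q_{ij}$. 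In the cooperation regimes of interest the states converge to a constant (or ramp), so the consensus-direction dynamics $e^{At}$ is non-decaying; finiteness of $c^\intercal\bar{P}c$ then forces $\bar{Q}=\bm{0}$, after which the cost itself vanishes and $c^\intercal\bar{P}c=0$ for every $c$ gives $\bar{P}=\bm{0}$. Finally I would lift these to the full block columns using $P,Q\succeq 0$: writing $Q=M^\intercal M$, the identity $\bar{Q}=(M\mathbf{J})^\intercal(M\mathbf{J})=\bm{0}$ implies $M\mathbf{J}=\bm{0}$ and hence $Q\mathbf{J}=\bm{0}$; the same argument gives $P\mathbf{J}=\bm{0}$. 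Unpacking these two identities blockwise is exactly \eqref{eq:necessary}.

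The main obstacle is precisely this last step. The consensus subspace is not an asymptotically stable mode of the closed loop, so no plain Lyapunov-integral representation of $P$ is available there; the leverage that forces $\bar{Q}=\bm{0}$ is the non-decaying nature of the consensus dynamics together with finiteness of the optimal value, which is where the cooperation-pattern assumption enters. The second delicate point is that $\tilde{B}^\intercal P\mathbf{J}=\bm{0}$ cannot be converted into $P\mathbf{J}=\bm{0}$ by cancellation when $B$ is non-square, so the positive-semidefinite lifting from the sandwiched quantities $\bar{P},\bar{Q}$ to the full block sums is essential rather than cosmetic.
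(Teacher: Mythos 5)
Your first half coincides with the paper's: right-multiplying the ARE by the consensus direction (you use $\mathbf{J}=\bm{1}_N\otimes I_n$, the paper uses $\bm{1}_N\otimes x_c$ with $x_c$ arbitrary, which is the same thing), killing the quadratic term through $\tilde{K}\mathbf{J}=\bm{0}$, and arriving at the Sylvester-type identity linking $P\mathbf{J}$ and $Q\mathbf{J}$ is exactly the paper's display \eqref{eq:ARE_sum}. Where you diverge is in how that identity is forced to yield $P\mathbf{J}=Q\mathbf{J}=\bm{0}$: the paper closes by requiring \eqref{eq:ARE_sum} to hold regardless of the value of $A$ (it sets $A=\bm{0}$ to isolate $\sum_j Q_{ij}$, then invokes the arbitrariness of the columns of $A$ to kill $\sum_j P_{ij}$), whereas you appeal to finiteness of the optimal value along consensus initial conditions plus positive semidefiniteness.

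The gap is in your forcing step. From $x(0)=\mathbf{J}c$ the optimal cost is $\int_0^\infty c^\intercal e^{A^\intercal t}\bar Q e^{At}c\,dt$, and finiteness of this integral annihilates $\bar Q$ only on the non-decaying directions of $e^{At}$. The constant cooperation pattern only requires $A$ to possess a marginally stable (e.g., semisimple zero) eigenvalue; it is entirely compatible with $A$ also having eigenvalues in the open left half-plane, and for $c$ in the corresponding stable eigenspace the integral converges even when $\bar Qc\neq\bm{0}$. Concretely, sandwiching your Sylvester identity between $\mathbf{J}^\intercal$ and $\mathbf{J}$ gives $A^\intercal\bar P+\bar PA+\bar Q=\bm{0}$ with $\bar P,\bar Q\succeq 0$; testing with an eigenvector $v$ of $A$ yields $2\operatorname{Re}(\lambda)\,v^{*}\bar Pv+v^{*}\bar Qv=0$, which forces $\bar Qv=\bm{0}$ only when $\operatorname{Re}(\lambda)\ge 0$ and constrains nothing when $\operatorname{Re}(\lambda)<0$. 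So your argument establishes the lemma only under the extra hypothesis that $A$ has no strictly stable modes, and you would need either that assumption or the paper's ``for all $A$'' reading of necessity to finish. Your final PSD lifting from $\bar Q=\bm{0}$ to $Q\mathbf{J}=\bm{0}$ (and likewise for $P$) is correct and is a clean step the paper does not need, but it does not repair the forcing step that precedes it.
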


\begin{proof}
The proof is based on the the requirement of consensus (synchronization) on the NDS when the time goes to infinity. 
Let $x_c\in\mathbb{R}^{n}$ represent the synchronized state vector for all nodes. 
Notice that if the consensus is achieved, the global control input $u$ is zero regardless the value of $x_c$, i.e., 
\begin{equation}
u=\tilde{K} (\bm{1}_N \otimes x_c ) = (L \bm{1}_N \otimes {B} K x_c) = \bm{0}. 
\end{equation}
Based on this fact, multiplying the global ARE \eqref{eq:global_ARE} with $(\bm{1}_N \otimes x_c )$ yields that 
\begin{align}\label{eq:ARE_sum}
&(\tilde{A}^{\intercal} P+ P \tilde{A}- P \tilde{B} R^{-1} \tilde{B}^{\intercal} P+Q) (\bm{1}_N \otimes x_c ) \nonumber \\
=&(\tilde{A}^{\intercal} P + P \tilde{A} +Q )(\bm{1}_N \otimes x_c ) =\bm{0} \nonumber \\
\Leftrightarrow  & \left( \sum_{j=1}^{N}  a_{ji} \sum_{\ell=1}^{N} P_{j\ell}  +  \sum_{j=1}^{N} P_{ij} A  + \sum_{j=1}^{N} Q_{ij} \right ) x_c =\bm{0}, \forall i\in\mathcal{V} \nonumber \\
\Rightarrow  & \sum_{j=1}^{N} \left(  a_{ji} \sum_{\ell=1}^{N} P_{j\ell} \right) + \left( \sum_{j=1}^{N} P_{ij} \right ) A  + \sum_{j=1}^{N} Q_{ij}=\bm{0},  \forall i\in\mathcal{V} .  
\end{align}
where $P \tilde{B} R^{-1} \tilde{B}^{\intercal} P = P \tilde{B} \tilde{K} $ is applied in the first equality, and the last row is derived due to the arbitrariness of $x_c$. 
Next, note that the condition \eqref{eq:ARE_sum} should also hold regardless of the value of $A$. 
Therefore, when $A=\bm{0}$, we obtain  
\begin{equation}\label{eq:Q_sum}
\sum_{j=1}^{N} Q_{ij}=\bm{0},  \forall i\in\mathcal{V}.
\end{equation}
Then, substituting \eqref{eq:Q_sum} into $\eqref{eq:ARE_sum}$ yields that 
\begin{equation}
\sum_{j=1}^{N} \left(  a_{ji} \sum_{\ell=1}^{N} P_{j\ell} \right) + \left( \sum_{j=1}^{N} P_{ij} \right ) A=\bm{0} \Rightarrow \sum_{j=1}^{N} P_{ij}=\bm{0},
\end{equation}
where the arbitrariness of all columns in $A$ is used for the derivation. 
The proof is completed.  
\end{proof}

\subsection{Cost Matrices Inference}

In this part, we show how the cost matrices $\{Q,R\}$ can be obtained by leveraging the ARE \eqref{eq:global_ARE} and the condition \eqref{eq:necessary}. 

Note that the ARE \eqref{eq:global_ARE} does not involve $R$, and a straightforward method is to replace $P$ with an expression of $R$. 
Define auxiliary matrices $S\in\mathbb{R}^{N^2 n^2 \times N^2 n^2} $ and $H \!\in\!  \mathbb{R}^{N^2 n^2 \times 2 N^2 n^2}$ by 
\begin{align}
S &=I_{Nn} \otimes \tilde{A}^\intercal + \tilde{A}^\intercal \otimes I_{Nn}-  \tilde{K}^\intercal \tilde{B}^\intercal \otimes I_{Nn}, \label{eq:zii_def}\\
H &= [S , I_{N^2 n^2}]. 
\end{align}
Then, we characterize the solution space of $\{Q,P\}$. 
\begin{theorem}\label{th:numspace}
Let $\theta=[\operatorname{vec}^\intercal( P) ,\operatorname{vec}^\intercal(Q)]^\intercal \!\in\! \mathbb{R}^{2N^2 n^2}$. 
Then, we have 
\begin{equation}\label{eq:kernel_condition}
\theta \in \operatorname{ker}(H),
\end{equation}
where $\operatorname{ker}(H)$ is the kernel space of $H$ and always exists. 
\end{theorem}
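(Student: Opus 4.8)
The plan is to show directly that $H\theta = \bm{0}$ by unpacking the definition of $H$ and recognizing that $S\operatorname{vec}(P)$ reassembles the left-hand side of the algebraic Riccati equation. Since $H = [S, I_{N^2 n^2}]$ and $\theta = [\operatorname{vec}^\intercal(P), \operatorname{vec}^\intercal(Q)]^\intercal$, the product expands as $H\theta = S\operatorname{vec}(P) + \operatorname{vec}(Q)$, so it suffices to prove $S\operatorname{vec}(P) = -\operatorname{vec}(Q)$.

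The first step is to apply the vectorization identity $(M_2^\intercal \otimes M_1)\operatorname{vec}(X) = \operatorname{vec}(M_1 X M_2)$ from the preliminaries to each of the three Kronecker terms in $S$. The term $(I_{Nn}\otimes \tilde{A}^\intercal)\operatorname{vec}(P)$ becomes $\operatorname{vec}(\tilde{A}^\intercal P)$, the term $(\tilde{A}^\intercal \otimes I_{Nn})\operatorname{vec}(P)$ becomes $\operatorname{vec}(P\tilde{A})$, and the term $(\tilde{K}^\intercal \tilde{B}^\intercal \otimes I_{Nn})\operatorname{vec}(P)$ becomes $\operatorname{vec}(P\tilde{B}\tilde{K})$, where I would note $\tilde{K}^\intercal \tilde{B}^\intercal = (\tilde{B}\tilde{K})^\intercal$ so that the trailing factor in the reassembled product is exactly $\tilde{B}\tilde{K}$. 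Collecting these gives $S\operatorname{vec}(P) = \operatorname{vec}(\tilde{A}^\intercal P + P\tilde{A} - P\tilde{B}\tilde{K})$.

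The second step is to invoke the optimality relation $\tilde{K} = R^{-1}\tilde{B}^\intercal P$ from \eqref{eq:k-LQR}, which lets me rewrite $P\tilde{B}\tilde{K} = P\tilde{B}R^{-1}\tilde{B}^\intercal P$. Substituting into the ARE \eqref{eq:global_ARE} yields $\tilde{A}^\intercal P + P\tilde{A} - P\tilde{B}\tilde{K} = -Q$, hence $S\operatorname{vec}(P) = -\operatorname{vec}(Q)$ and therefore $H\theta = \bm{0}$, which establishes \eqref{eq:kernel_condition}.

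Finally, the existence claim follows from a dimension count: $H \in \mathbb{R}^{N^2 n^2 \times 2N^2 n^2}$ has twice as many columns as rows, so by the rank--nullity theorem its nullity is at least $N^2 n^2 > 0$, guaranteeing that $\operatorname{ker}(H)$ is nontrivial. I do not expect any genuine obstacle here; the only point demanding care is bookkeeping the transposes when applying the vectorization identity, specifically confirming that the $\tilde{K}^\intercal \tilde{B}^\intercal$ factor transposes to $\tilde{B}\tilde{K}$ and not its reverse, since an error there would misplace the feedback term in the reassembled ARE and break the cancellation with $Q$.
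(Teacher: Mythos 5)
Your proof is correct and follows essentially the same route as the paper's: substitute $\tilde{K}=R^{-1}\tilde{B}^\intercal P$ into the ARE, vectorize term by term to get $S\operatorname{vec}(P)+\operatorname{vec}(Q)=\bm{0}$, i.e., $H\theta=\bm{0}$, and establish nontriviality of $\operatorname{ker}(H)$ by a dimension count on the wide matrix $H$. Your rank--nullity argument for the existence claim is in fact tidier than the paper's, which detours through counting the redundant variables imposed by \eqref{eq:necessary} before reaching the same conclusion.
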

\begin{proof}
The key of this proof is to utilize the dependence of $R$ on $P$. 
By substituting $\tilde{K}=R^{-1}\tilde{B}^\intercal P$ into the ARE \eqref{eq:global_ARE}, its equivalent vectorized form is written as  
\begin{align}\label{eq:QP_vector}
(& I_{Nn} \otimes A^\intercal) \operatorname{vec}( P) + (A^\intercal \otimes I_{Nn} ) \operatorname{vec}( P) \nonumber \\
&- (  \tilde{K}^\intercal \tilde{B}^\intercal \otimes I_{Nn} )\operatorname{vec}(P)+ \operatorname{vec}(Q)= \bm{0} \nonumber \\
&\Rightarrow S \operatorname{vec}(P) + \operatorname{vec}(Q)=\bm{0}. 
\end{align}
Clearly, $\{\widehat{P},\widehat{Q}\}$ can be directly obtained by solving the following homogeneous equations
\begin{equation}\label{eq:Hb}
H \theta=\bm{0},
\end{equation}
where all feasible solutions of $\theta$ in the above equation constitute the kernel of $H$, i.e., $\operatorname{ker}(H)$. 

Next, we show that $\operatorname{ker}(H)$ always exists. 
For the formula $H\theta=\bm{0}$, the number of equations is $N^2n^2$ while the number of unknowns is $2N^2n^2$. 
Meanwhile, the equality constraint \eqref{eq:necessary} needs to hold, which implies that there are $2Nn^2$ redundant variables in $\theta$. 
Hence, the number of nonredundant elements in $\theta$ is 
\begin{equation}
q=2N^2n^2-2Nn^2.
\end{equation}
Then, we have that $q>N^2n^2$ holds if and only if $N>2$. 
However, in this situation, the difference between $q$ and $\operatorname{rank}(W)$ satisfies
\begin{align}
q-\operatorname{rank}(W)\ge q- N^2 n^2\ge (N-2)Nn^2>3,
\end{align}
which is even looser than the necessary condition $q-\operatorname{rank}(W)=1$ to make $\theta$ a unique solution up to a scalar. 
Hence, $\operatorname{ker}(H)$ contains infinite solutions, and the proof is completed. 
\end{proof}

Theorem \ref{th:numspace} demonstrates the solution space of $\theta$, and provides a way to estimate the matrices $Q$ and $P$ by solving a group of homogeneous equations based on known $\{\tilde{A},\tilde{B},\tilde{K}\}$. 
If the solution $\theta$ is determined and $\widehat{P}$ is extracted accordingly, we can further leverage $\widehat{R} \tilde{K}=\tilde{B}^\intercal \widehat{P}$ and $\operatorname{rank}(\tilde{K})=\operatorname{rank}(\tilde{B})=\operatorname{rank}(\tilde{K}\tilde{B})=Nm$ to obtain 
\begin{equation}\label{eq:obtain_R}
 \widehat{R} \tilde{K} \tilde{B} = \tilde{B}^\intercal \widehat{P} \tilde{B} \Rightarrow  \widehat{R}= \tilde{B}^\intercal \widehat{P} \tilde{B} ( \tilde{K} \tilde{B})^{-1}.
\end{equation}

\begin{remark}
In \cite{molloy2022inversea}, the authors provide a way to directly obtain $\{\widehat{Q},\widehat{R}\}$ by using the inverse of $S$ and taking $\operatorname{vec}(P) =-S^{-1} \operatorname{vec}(Q)$ into the vectorized form of $R\tilde{K}=\tilde{B}^\intercal P$, given by
\begin{align}\label{eq:QR_vector2}
&(\tilde{K}^\intercal \otimes I_{Nm}) \operatorname{vec}(R)=( I_{Nn} \otimes \tilde{B}^\intercal) \operatorname{vec}(P) \nonumber \\
 \Rightarrow & ( I_{Nn} \otimes \tilde{B}^\intercal)S^{-1} \operatorname{vec}(Q)  \!  + \! (\tilde{K}^\intercal \otimes I_{Nm}) \operatorname{vec}(R) \!=\! \bm{0},
\end{align}
which implies that $\{Q,R\}$ can be obtained by solving a groups of homogeneous linear equations. 
However, in general, the invertibility of $S$ is not guaranteed, hindering the usage of \eqref{eq:QR_vector2} to obtain $\{Q,R\}$. 
In contrast, our method can avoid this issue.
\end{remark}

Note that in practice, we do not have the true values of $\{\tilde{A},\tilde{B},\tilde{K}\}$ but their corresponding estimates obtained in the last section. 
Accordingly, the estimate for $H$ is given by 
\begin{align}
\widehat{H}=[&I_{Nn} \otimes (I_{N} \otimes \widehat{A})^\intercal + (I_{N} \otimes \widehat{A})^\intercal \otimes I_{Nn}  \nonumber \\
 &-  (\widehat{L}\otimes \widehat{B}\widehat{K})^\intercal \otimes I_{Nn}, I_{N^2n^2}],
\end{align}
which is used to construct the equation $\widehat{H} \theta=\bm{0}$. 
Besides, it is worth mentioning that in many situations, the cost matrices $Q$ and $R$ are considered to be symmetric. 
When this symmetry is taken into account to solve $\widehat{H} \theta=\bm{0}$, the feasible solution will no longer exist because the number of non-redundant unknowns in $\theta$ becomes 
\begin{align}
\tilde{q} &=2N^2 n^2-2N n^2-(N^2n^2-Nn) \nonumber \\
&=N^2 n^2 + Nn - 2Nn^2 ,
\end{align}
which implies that it is smaller than the number of equations, i.e., $\tilde{q} < N^2 n^2$.

To make a solvable inverse optimal LQ control for NDSs, 
we treat the violation of $\widehat{H} \theta$ as an objective function and solve the following constrained  quadratic program problem
\begin{subequations}\label{eq:seek_theta}
\begin{align}
\mathop{\min }\limits_{\theta}~~& \theta^\intercal \widehat{H}^\intercal \widehat{H} \theta \label{eq:seek-a}\\
{\text{s.t.}}~~& Q\succ0, ~P\succ0, \label{eq:seek-b} \\
& \sum\nolimits_{j=1}^{N} Q_{ij}=\bm{0}, ~\sum\nolimits_{j=1}^{N} P_{ij}=\bm{0},~ \forall i\in\mathcal{V} , \label{eq:seek-d}\\
& Q=Q^\intercal ,~ P=P^\intercal . \label{eq:seek-c}
\end{align}
\end{subequations}
Notice that the constraint \eqref{eq:seek-b} enforces positive definiteness to avoid trivial solutions, and the constraint \eqref{eq:seek-c} can be dropped if the symmetry is not required. 
Since both the objective function and constraints in \eqref{eq:seek_theta} are convex, 
the optimal solution always exists (not necessarily unique), and one can resort to many mature optimization techniques to solve the problem. 
The matrix $R$ can be still estimated by \eqref{eq:obtain_R}.

In summary, the overall procedure of the proposed inverse inference method can be cast into three parts:
\begin{itemize}
\item Estimate the global closed-loop matrices: obtain the matrices $\widehat{A}_d$ by \eqref{Prob:SDP} and $\widehat{A}_c$ by \eqref{eq:estimate_Ac} consequently. 
\item Decouple the nodal dynamics and topology: estimate $A$ by \eqref{eq:estimate_A2}, $L$ by Algorithm \ref{algo:solve_L}, and $\{B,K\}$ by \eqref{eq:BK}.
\item Reconstruct the cost function: solve the parameters $\{Q,R\}$ by \eqref{eq:seek_theta} and \eqref{eq:obtain_R} consequently. 
\end{itemize}

\section{Simulations}\label{sec:simulation}
In this section, we provide numerical simulations to verify the theoretical results.

\begin{figure}[t]
\centering
\includegraphics[width=0.4\textwidth]{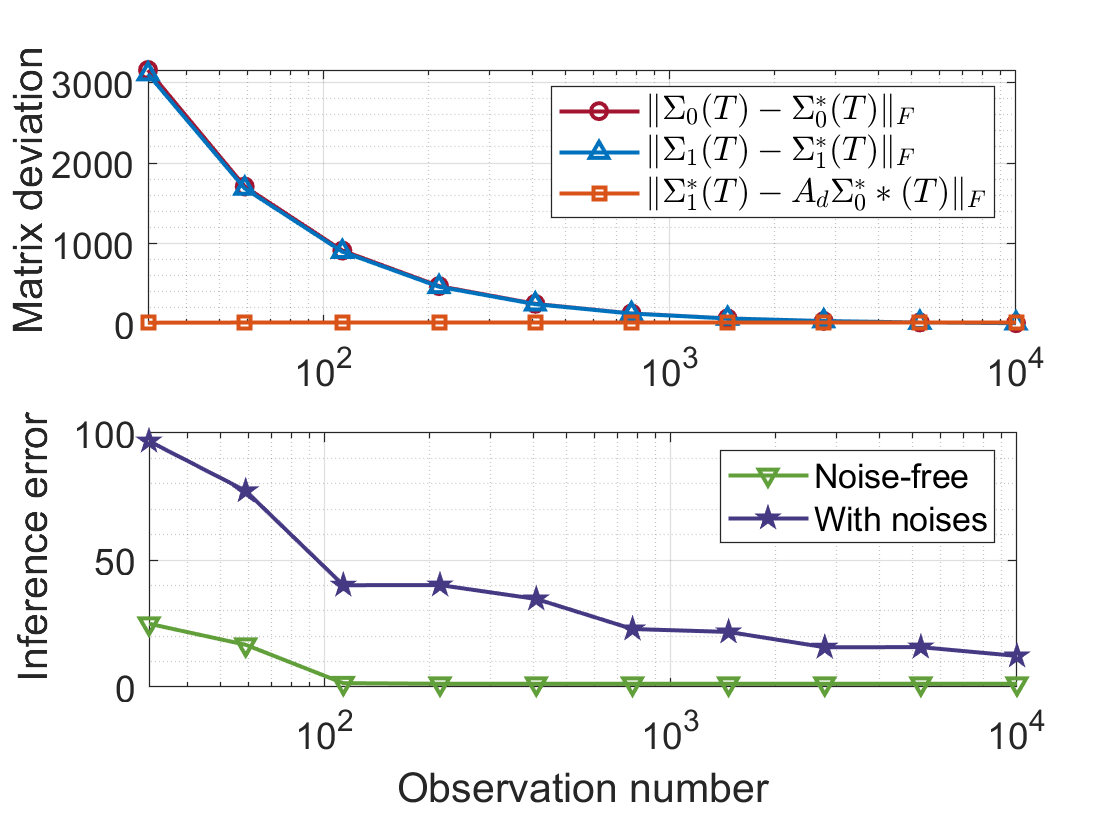}
\vspace{-10pt}
\caption{Asymptotic inference performance on $A_d$ by the naive estimator.}
\label{fig:asymptotic}
\end{figure}

\textit{Simulation setting}. We consider a NDS of $6$ nodes subject to the model \eqref{eq:node_dynamics}, where the adjacency topology $\textbf{A}_0$, nodal matrices $A$ and $B$, and the feedback gain $K$ are given by
\begin{align}
&\textbf{A}_0=\begin{bmatrix}
0& 1& 0& 0& 0& 0 \\
1& 0& 2& 0& 0& 0\\
0& 2& 0& 1& 0& 0\\
0& 0& 2& 0& 1& 0\\
0& 0& 0& 3& 0& 1\\
0& 0& 1& 0& 0& 0
\end{bmatrix},~
A=\begin{bmatrix}
-1& 2& 5\\ 
1& -1& 2\\
-5& 0& -1
\end{bmatrix}, \nonumber \\
&B^T=\begin{bmatrix}
0& 
0&
1
\end{bmatrix},~
K=\begin{bmatrix}
-0.0365  &  0.4295  &  0.9216
\end{bmatrix}. \nonumber 
\end{align}
It can be verified that the NDS will achieve the constant cooperation pattern in Definition \ref{def:constant}. 
The critical sampling period is $\ln{2}/\|A_c\|=0.073$s by Theorem \ref{th:reconstruction}. 
The initial states of each node in each dimension are randomly generated in $[0,1000]$. 
The state trajectory starts from $0$s and ends at $50$s. 
Five groups of noise standard deviation are given by $\text{G}_1=\operatorname{diag}\{2,1,0.2\}$, $\text{G}_2=\text{G}_1/2$, $\text{G}_3=\text{G}_1/4$, $\text{G}_4=\text{G}_1/20$, $\text{G}_5=\text{G}_1/40$. 
The inference error of an interested matrix $M_0$ is calculated by 
\begin{equation}
\text{Er}(\widehat{M_0},M_0)={\|\gamma \widehat{M_0}-M_0\|_{F}}/{\|M_0\|_{F}},
\end{equation}
where $\gamma$ represents the scalar ambiguity for $\widehat{M_0}$ (if there is).

\begin{figure}[t]
\centering
\includegraphics[width=0.4\textwidth]{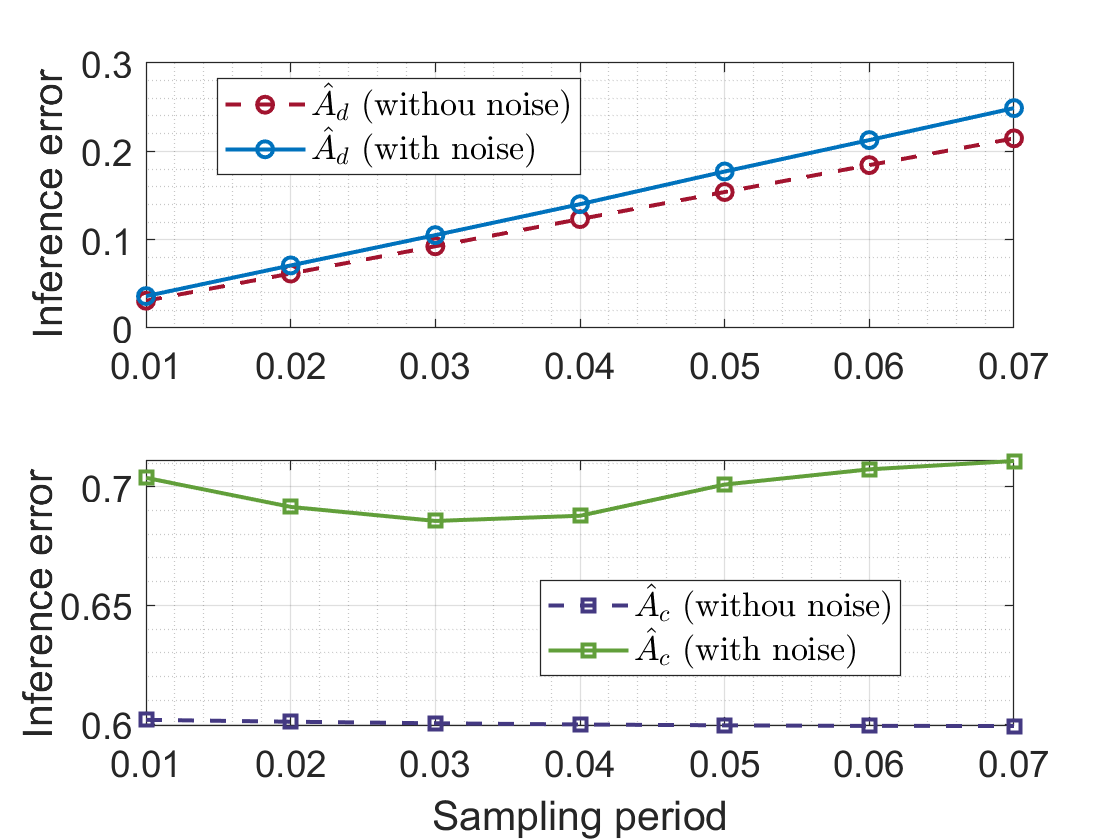}
\vspace{-8pt}
\caption{Performance under different sampling periods.}
\label{fig:sampling}
\end{figure}

\textit{Results analysis}. 
First, since the inference for $A_d$ is fundamental for the entire method, we present the asymptotic performance of the naive estimator $\widehat{A}_{d}(T)=\Sigma_1(T) \left( \Sigma_0(T) \!-\!I_N \!\otimes\! \Gamma \right)^{-1}$ in Fig.~\ref{fig:asymptotic}. 
In this test, the sampling period is set to $\tau=0.05$s and the noise standard deviation is specified as $\text{G}_4$. 
For clarity, the top subplot shows the deviation between the noisy sample matrix $\Sigma_0(T)$ and $\Sigma_1(T)$) versus their noise-free counterparts $\Sigma_0^*(T)=(X^- (X^-)^{\intercal})/T$ and $\Sigma_1^*(T)=(X^+ (X^-)^{\intercal})/T$, along with the residual $\|\Sigma_1^*(T)-A_d \Sigma_0^*(T)\|_{F}$. 
The plotted curves clearly indicate that both $\Sigma_0(T)$ and $\Sigma_1(T)$ will approach their noise-free counterparts asymptotically. 
Similarly, in the bottom subplot, the performance of the naive estimator also converges the noise-free version asymptotically, thereby confirming the conclusions of Theorem \ref{th:matrix_error}. 
However, it is important to note that the naive estimator does not achieve satisfactory accuracy when the number of observations is not sufficiently large. 
Hence, the naive estimator is considered an ideal design primarily in an asymptotic sense. 

In the following, we present the results of the proposed optimization based method. 
To begin with, we examine how the sampling period $\tau$ will influence the inference error of the discretized and continuous-time closed-loop matrices $A_d$ and $A_c$, respectively. 
In Fig.~\ref{fig:sampling}, results of both noise-free and noisy situations are plotted. 
In the noisy case, the noise standard deviation is set as $\text{G}_2$ and the curve value is computed by the average of $10$ random tests. 
The upper subfigure in Fig.~\ref{fig:sampling} depicts the errors of $\widehat{A}_d$, and it is evident that the inference error will increase as $\tau$ becomes larger. 
This trend aligns with our intuition, because given a fix time slot, a smaller sampling period brings more available observations, which is beneficial for inference. 
The bottom subfigure illustrates the error of $\widehat{A}_c$, which is based on the logarithm of $\widehat{A}_d$. 
In this test, the errors remain generally stable especially for the noise-free case. 
We observe that this is mainly because given finite observations, the error of $\widehat{A}_d$ has not exhibited asymptotic convergence yet. 
According to the construction formula $\frac{1}{\tau}\sum_{k=1}^{\infty}(-1)^{k+1} \frac{(\widehat{A}_d-I)^k}{k}$, both the denominator and the numerator terms will decrease as $\tau$ decreases, resulting in a nearly stable tendency in the plots. 

\begin{figure}[t]
\centering
\subfigure[Inference errors of $A_d$, $A_c$, $A$ and $BK$.]{\label{fig:error_observation}
\vspace{-10pt}
\includegraphics[width=0.4\textwidth]{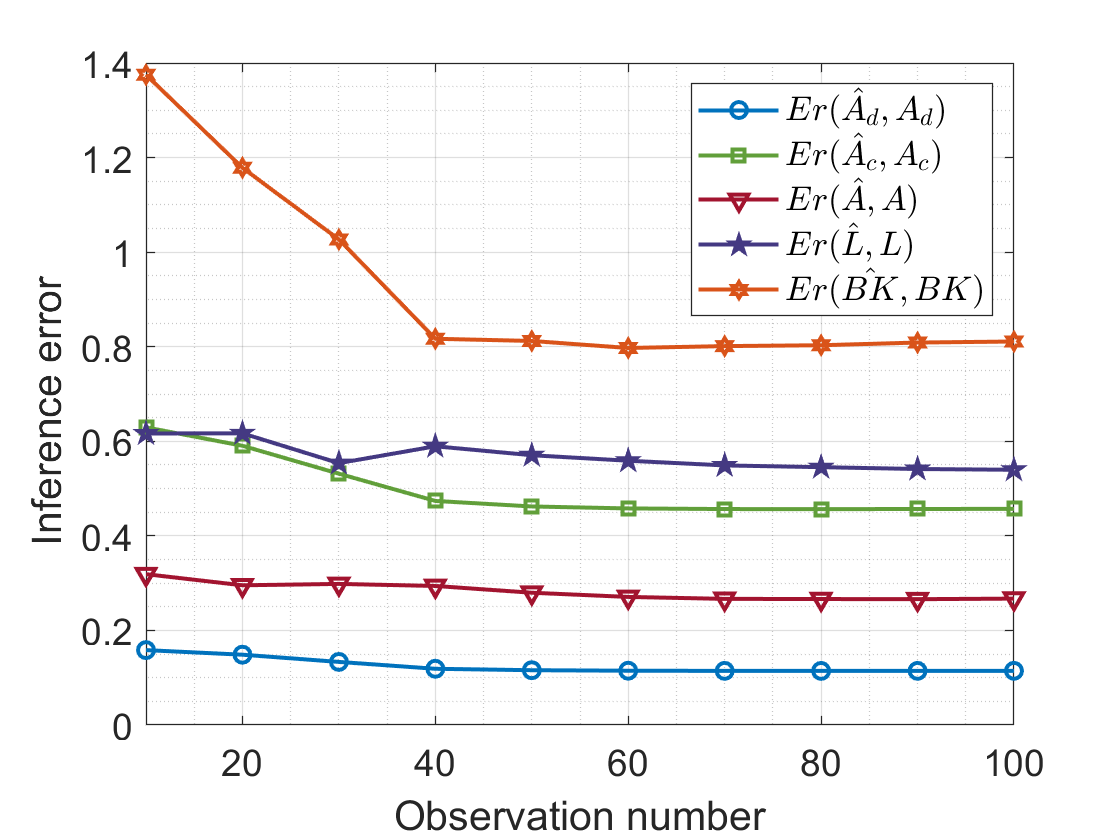}}
\subfigure[Inference errors of $(Q,P)$ and its trajectory error.]{\label{fig:error_LQ}
\includegraphics[width=0.4\textwidth]{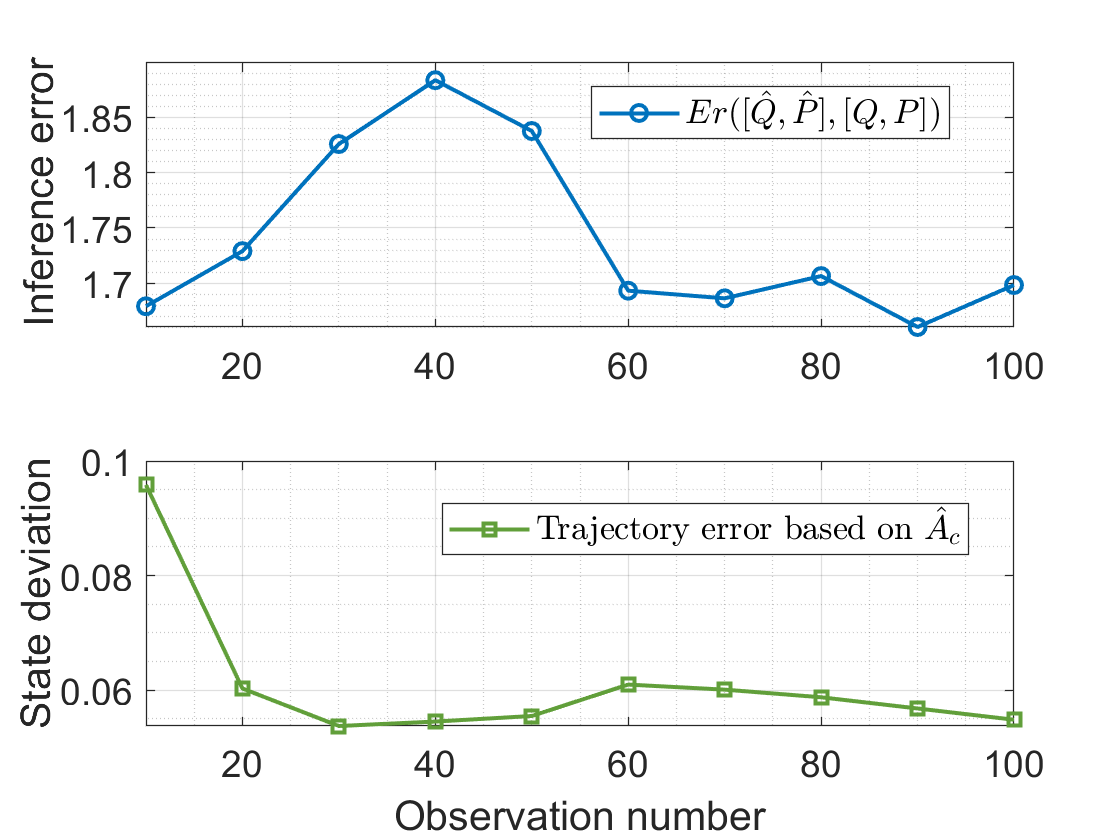}}
\vspace{-8pt}
\caption{Performance under $\tau=0.05$s and noise setting $G_2$.}
\label{fig:infer_parameters}
\end{figure}

Next, we provide the inference errors of all concerned parameters regarding the observation number with $\tau=0.05$. 
Fig.~\ref{fig:error_observation} presents the inference results of $A_d$, $A_c$, $A$ and $BK$\footnote{Since each node has multiple inputs ($m>1$) in this example, we directly use $BK$ to verify the accuracy up to a scalar ambiguity.} in a noisy setting, where the noise standard deviation is set as $\text{G}_2$. 
In these plots, the inference errors of the concerned matrices will generally decrease with $T$ initially, but remain slightly changed as $T$ becomes larger. 
Although this pattern may not correspond to our typical expectation that the errors should approach zero asymptotically, 
it is important to note that this behavior is largely attributable to numerical precision limitations inherent in computational processes (similar trends are noted even in noise-free scenarios). 
As the system state approaches zero in a short time, subsequent observations are almost zero-mean noises, 
contributing little to the accuracy of the estimator. 
In Fig.~\ref{fig:error_LQ}, the top subfigure displays the results of inferring the optimal control parameters $Q$ and $P$. 
Notably, the large inference error and fluctuations in curve $\text{Er}([\widehat{Q},\widehat{P}],[Q,P])$ primarily arise from the non-unique solutions to problem \eqref{eq:seek_theta}. 
Nevertheless, given the same initial state, these cost parameters are capable of generating trajectories closely resembling the actual ones, as shown in the bottom subfigure.

Finally, Fig.~\ref{fig:variance} depicts the inference errors of concerned matrices under five groups of noise standard deviation settings. 
In this test, we set the sampling period as $\tau=0.05$s and utilize $100$ observations for the estimators, with the curves representing the average values of 10 random implementations. 
It is clear to see from an arbitrary curve that, the inference error will decrease with the noise variance growing, 
which matches our intuitive expectation.

\begin{figure}[t]
\centering
\includegraphics[width=0.4\textwidth]{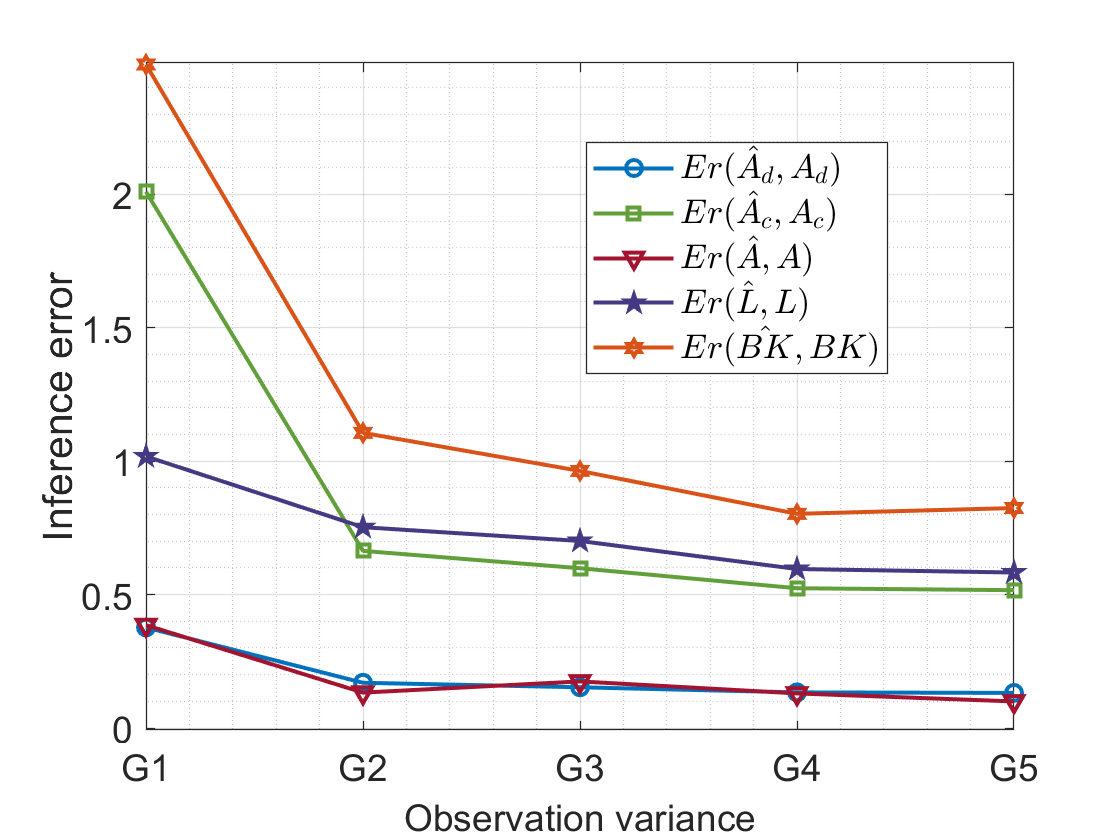}
\caption{Inference errors regarding different noise variances under $\tau=0.05$s.}
\label{fig:variance}
\end{figure}

\section{Conclusion}\label{sec:conclusions}

In this paper, we presented a bi-level framework for inferring the state feedback cooperative control of NDSs, where the interested components include the nodal dynamics, the topology between nodes, and the feedback control gain. 
Specifically, only noisy observations on a single trajectory of the NDS are available. 
At the first level, we utilized the causality of consecutive observations to construct the estimator for the discretized closed-loop system matrix, which was proved asymptotically unbiased when the NDS is stable. 
At the second level, we resorted to the matrix logarithm to recover the continuous-time system from its discrete-time counterpart, and derived the sampling period condition and the recovery error the bound. 
Following this, we developed least squares based procedures to decouple  the components. 
Furthermore, we demonstrated how to reconstruct the LQ objective function that governs the control, providing the necessary conditions of the corresponding parameters. 
Finally, numerical simulations illustrated the effectiveness of the proposed method. 

The investigation of this paper provides new insights to reveal the underlying mechanism of NDSs, and can pave the way for more promising directions, 
such as i) extending the method to a general heterogeneous NDS model, and ii) considering broader cooperative scenarios.

\end{document}